\newlength{\mycolumnwidth}
\newcommand{\faintmidrule}{\arrayrulecolor{black!30}\midrule\arrayrulecolor{black}}
\newif\ifpn
\definecolor{darkblue}{rgb}{0.0, 0.0, 0.55}
\newcommand{\newstuff}[1]{\textcolor{blue}{#1}}
\newcommand{\newnewstuff}[1]{\textcolor{purple}{#1}}
\newenvironment{newstuffenv}
    {\color{blue}}
    {}
\renewcommand{\newstuff}[1]{{#1}}
\renewcommand{\newnewstuff}[1]{{#1}}
\newtheorem{prop}{Proposition}
\DeclareMathSymbol{\shortminus}{\mathbin}{AMSa}{"39}
\global\long\def\xx{\boldsymbol{x}}
\global\long\def\1{\mathbf{1}}
\global\long\def\e{\varepsilon}
\global\long\def\taua{\tau_{A}}
\global\long\def\taui{\tau_{I}}
\global\long\def\pr{{\mathcal{P}}}
\global\long\def\dim{w}
\global\long\def\str{s}
\global\long\def\smalldim{w}
\global\long\def\smallstr{s}
\global\long\def\scriptdim{w}
\global\long\def\scriptstr{s}
\global\long\def\hp{b}
\global\long\def\prm{{\mathscr{P}}}
\global\long\def\prmO{{\mathscr{P}}_{0}}
\global\long\def\inac{\varnothing}
\global\long\def\hpopt{\hat{b}}
\global\long\def\lambdaopt{\hat{\lambda}}
\global\long\def\gammaConst{ \gamma }
\global\long\def\Xstrin{X_{\inac\vert \scriptstr}}
\global\long\def\Xdimin{X_{\inac\vert\scriptdim}}
\global\long\def\multAT{\frac{\phi}{\tau\eta_r}}
\global\long\def\avgtau{\overline{\tau}}
\global\long\def\multATNTe{\frac{\phi_{\e}}{\eta_{r(\e)}}}
\begin{document}
\title{Information bounds production in replicator systems}
\author{Jordi Piñero}
\affiliation{ICREA-Complex Systems Lab, Universitat Pompeu Fabra, 08003 Barcelona,
Spain}
\author{Damian R. Sowinski}
\affiliation{Department of Physics and Astronomy, University of Rochester, Rochester, NY, 14627, USA}
\author{Gourab Ghoshal}
\affiliation{Department of Physics and Astronomy, University of Rochester, Rochester, NY, 14627, USA}
\affiliation{Department of Computer Science, University of Rochester, Rochester, NY, 14627, USA}
\author{Adam Frank}
\affiliation{Department of Physics and Astronomy, University of Rochester, Rochester, NY, 14627, USA}
\author{Artemy Kolchinsky}\email{artemyk@gmail.com}
\affiliation{ICREA-Complex Systems Lab, Universitat Pompeu Fabra, 08003 Barcelona,
Spain}
\affiliation{Universal Biology Institute, The University of Tokyo, Tokyo 113-0033, Japan}
\affiliation{Barcelona Collaboratorium, Wellington 30, 08005 Barcelona, Spain}

\begin{abstract}
Environmental fluctuations can shape replicator dynamics, with important consequences for both prebiotic and modern ecosystems. However, it remains unclear how simple replicators can acquire and use information about fluctuating environments, given that such information processing is often assumed to require sophisticated mechanisms for sensing and control. Here, we show that even simple replicator networks can increase productivity by exploiting environmental information in a functional way. Using a model of autocatalytic replicators in a flow reactor, we derive an information-theoretic decomposition of productivity, with separate contributions from environmental uncertainty, side information, and distribution mismatch. We derive optimal strategies and universal bounds on the benefit of information and compare our findings with existing work, including ``Kelly gambling'' in information theory.  By applying our framework to a model of real-world molecular replicators, we demonstrate the benefits of internal memory and propose an experimental setup for detecting functional information in a minimal chemical system. %Our framework provides an operational route for measuring functional information and motivates future work on information-driven evolution in prebiotic replicator networks.
\end{abstract}

\maketitle

\section*{Introduction}
\label{sec:intro}

In the presence of environmental fluctuations, organisms acquire and use information about their environments in order to maintain and propagate themselves. 
In this sense, living systems are strikingly different from most nonliving systems, which may exhibit statistical correlations with their environments but do not use such correlations for functional purposes. 
The ability to use information in a functional manner has been termed `semantic'~\cite{kolchinsky2018semantic,sowinski2023semantic,ruzzante2023synthetic,godfrey-smithBiologicalInformation2007,sowinski2024information,Sowinski2024eDW}, `meaningful'~\cite{nehaniv2002meaningful,koonin2016meaning}, and `functional information'~\cite{collier2008information,hazen2007functional} in the literature. \newstuff{Simply put, functional information refers to statistical correlations that are necessary for a system to achieve functional outcomes~\cite{hazen2007functional,kolchinsky2018semantic}.}

Until now, functional information has been mostly considered in the context of modern organisms, which have sophisticated genetic~\cite{adami2002complexity,koonin2016meaning} and sensory~\cite{tu2013quantitative,palmer2015predictive,tkavcik2016information,mattingly2021escherichia} systems for acquiring and processing information. % information-processing systems. 
However, it is possible that functional information was already present in minimal prebiotic replicators~\cite{egbert2023behaviour}, given that environmental fluctuations (dry/wet cycles, seasons, diurnal oscillations, etc.) are thought to play a key role in the origin of life~\cite{damer2020hot,ianeselli2023physical}. 

%Information is also considered to play an important role in facilitating other key transitions in abiogenesis~\cite{szathmary1995major,jeancolas2020thresholds}.

To study functional information in minimal replicators, one may consider replicator dynamics in a flow reactor. This is a standard theoretical setting for investigating various kinds of systems, ranging from 
self-replicating  molecules~\cite{eigen1971selforganization,schuster1983replicator,iwasa1988free,von1993minimal,karev2010replicator,baum2023ecology,kolchinsky2024thermodynamics,despons2024structural} to microbial  communities~\cite{stephanopoulos1979growth,hsu1980competition,cushing1980two,pavlou1990coexistence,smithTheoryChemostatDynamics1995}. Moreover, such flow setups are  now being experimentally realized by chemists studying the origin of life and synthetic self-replication~\cite{von1986self,lee1996self,lincoln2009self,corbett2006dynamic,adamski2020self,vela2022collective,bandela2021primitive,mizuuchi2022evolutionary}. 

In this paper, we investigate the relationship between environmental fluctuations and information in minimal replicators. Our general setup is motivated by the example shown in Fig.~\ref{fig:scheme}. There is a population of replicators in a flow setup that can, under certain conditions, copy themselves from reactants and undergo exchange reactions, which convert between replicator types. 
The replicators are treated as simple self-replicating entities, and we do not assume they support combinatorial information storage (as in polymers). 
The replicators are exposed to a fluctuating environment, including active phases during which they grow and compete. A natural metric of interest is  \emph{productivity}, the average rate at which replicators flow out of the reactor. %  during a given time interval. 

\begin{figure*}[ht]
    \centering
    \includegraphics[width=1.9\mycolumnwidth]{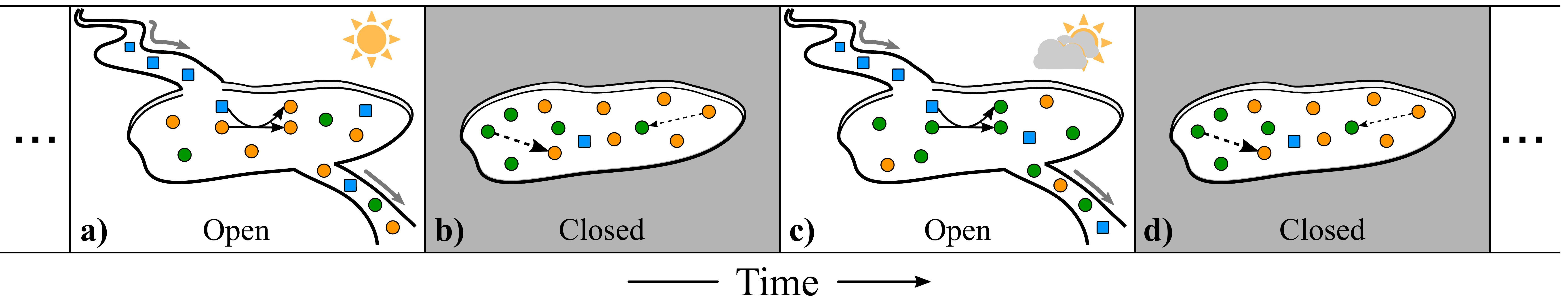}
    \caption{{\bf An example  scenario that motivates our analysis.}
    A reaction volume (e.g., pond) contains a network of replicators (e.g., orange and green circles), which may be minimal autocatalytic molecules or more complex organisms. \newstuff{During active phases {\bf(a,c)}, the system is open to flow and reaches a nonequilibrium steady state, while during inactive phases {\bf(b,d)}, the system is closed and replicator concentrations partially equilibrate. During activity (open phases), the replicators are also exposed to a fluctuating environment, e.g., days with different light intensities such as strong light {\bf(a)} or weak light \bf{(c)}}. 
    The replicators reproduce with different rates, which may depend on the state of the environment (light intensity), and they also undergo exchange reactions (dashed arrows). 
    We focus on \emph{productivity}, the amount of replicators per unit time that flows out of the system during active phases. As we show, productivity depends on the replicator concentrations at the beginning of each active phase, which may in turn depend on the kinetics of the exchange reactions that partially equilibrate the system during inactive phases. In this sense, the exchange reactions implement a \emph{strategy} for dealing with environmental uncertainty. 
    We show that productivity has an intuitive information-theoretic expression, reflecting contributions from overall uncertainty about the environmental state (weak or strong light), side information provided by internal memory, and mismatch between environmental statistics and the strategy implemented by the exchange reactions. We also connect this scheme to Kelly gambling, where inactive phases and active phases correspond to betting and gambling stages, respectively.}\label{fig:scheme}
\end{figure*}

% In this paper, we investigate the relationship between environmental fluctuations and the efficiency of replication in a flow reactor. %ability of the system to convert reactants to replicators. 
% Our primary metric of interest is  \emph{productivity}, the average rate at which replicators flow out of the reactor during a given time interval. 
% %We develop a general theoretical framework for studying the relationship between productivity, environmental fluctuations, and information in replicator systems. 
We demonstrate that, in fluctuating environments, even simple replicators  can increase productivity by exploiting information about the environment. 
In particular, we derive a general expression for productivity 
that has %, Eq.~\eqref{eq:main-result}, with a natural information-theoretic interpretation. % We derive a general expression for the productivity 
%which %achieved by the system over a given time interval. This expression includes 
% In particular, as shown in Eq.~\eqref{eq:C}, it includes
contributions from three information-theoretic terms, reflecting in turn: % that can be interpreted as: 
{\em (i)} uncertainty about the current environment, {\em (ii)}  benefit of side information that helps predict the environment, and \emph{(iii)} mismatch between the actual and the optimal initial replicator proportions at the beginning of the active phases. 
%We also show that the maximum increase of optimal productivity due to side information has a universal information-theoretic form.
The last term is the only contribution that depends on the initial replicator proportions, which we refer to as the \emph{strategy} (see Fig.~\ref{fig:scheme}). %The strategy may depend on the exchange reactions that partially re-balance replicator concentrations between active phases, and 
%The strategy may also depend on additional external variables (e.g., the history of previous environments) that act as a source of side information. %In the example shown in Fig.~\ref{fig:scheme}, the strategy may be implemented by the exchange reactions that partially re-balance replicator concentrations between active phases. 
%Using information-theoretic techniques, 
We identify the optimal strategy, i.e., the initial proportions that minimize mismatch (and thus maximize productivity) given the statistics of environmental fluctuations. 
Perhaps surprisingly, we find that optimal strategies are biased toward slower-growing replicators, reflecting a `head start' that compensates for the longer time they need to reach high productivity in a favorable environment.
%} 
After deriving our general theoretical results, we apply them to %, we also apply our general theoretical framework to 
a real-world system: %. {In particular, we consider} 
the photocatalytic replicators developed by Otto and collaborators~\cite{liu2024light}. Similar to the scheme shown in Fig.~\ref{fig:scheme}, 
this system is exposed to active cycles of weak and strong light as well as % that favor different replicators. 
%The system also undergoes extended 
inactive phases, during which the replicator concentrations (partially) equilibrate due to exchange reactions. 
The inactive phases allow the system to establish a strategy for exploiting active phases, while possibly maintaining an internal memory. 
In temporally-correlated environments, this internal memory {of past environments} can serve as a source of side information that increases productivity. In the setting of this minimal replicator system, we suggest an experimental setup for detecting signatures of functional information (information leading to increased productivity).

\vfill

\section*{Results}

\subsection*{Theoretical results}
\label{sec:res1}

 {In this section, we derive our general information-theoretic expressions for productivity. These results concern replicator dynamics and productivity in an open flow reactor, for instance as might occur during the active phases {\bf (a)} and {\bf (c)} illustrated in Fig.~\ref{fig:scheme}. For convenience, a summary of relevant parameters and variables can be found in Table~\ref{table:glossary}.}

\begin{table}[ht]
\centering 
\begin{tabular}{lcc}    \toprule
\emph{Parameter} & \emph{Symbol} & \emph{Units} \\\midrule
 Inflow concentration of reactant   & $\mu$  & C  \\ 
 Dilution rate (inverse residence time)              & $\phi$  & T$^{ \shortminus 1}$\\
 Temporal duration & $\tau$ & T \\
 Replication rate of species $i$             & $\eta_i$  & C$^{\shortminus1}$T$^{\shortminus 1}$\\
 Fraction of time reactor is open & $\alpha$ & ---\\
   \midrule
\emph{Variable} &   &   \\\midrule
 Reactant concentration inside reactor   & $a$  & C  \\ 
 Replicator concentration of species $i$     & $x_i$  & C  \\
 Total replicator concentration     & $X$  & C  \\
 Total solute concentration     & $S$  & C \\
 Productivity    & $\pr$  & CT$^{\shortminus1}$  \\
 Productivity bound with side information    & $\prm$  & CT$^{\shortminus1}$  \\
 Productivity bound with no side-information \quad   & $\prmO$  & CT$^{\shortminus1}$  \\
 Environment outcome & $\e$ & --- \\
 Initial preparation outcome & $y$ & --- \\
  \bottomrule
\end{tabular}
\caption{{\bf Summary of parameters and variables for our general theoretical results.} Units notation: C  for mass concentration (mass per volume), T for time, --- for dimensionless. 
Throughout the text, steady-state values are indicated by a superscript star\,$^*$. Values of $\tau,\phi$ for environment $\e$ are indicated with subscripts as $\tau_{\e},\phi_{\e}$. 
\label{table:glossary}}

\end{table}

\subsubsection*{Setup}
\label{sec:setup}

We consider a well-mixed continuous-flow reactor with a dilution rate $\phi$ containing $n$ replicator species, indicated as $x_i$ for $i\in\{1,\ldots,n\}$ below. 
Species may represent either biological organisms (e.g., microbes) or abiotic chemical compounds (e.g., self-replicating molecules), though we typically imagine the latter. 
The reactor is also supplied with reactant species, indicated as $a$ below, a necessary resource for replication, which flows into the reactor at mass concentration $\mu$.

Each replicator copies itself via an autocatalytic reaction from reactants. Specifically, the concentration of replicator $i$ at time $t$ evolves as
\begin{align}
    \frac{d}{dt}x_i(t) &= \eta_i a(t)x_i(t) - \phi x_i(t)\,.\label{eq:replicator-dynamics} 
\end{align}
{The term  $\eta_i a(t)x_i(t)$ represents autocatalysis of the replicator, given reactant concentration $a(t)$ and rate constant $\eta_i$. The term $\phi x_i(t)$ represents outflow of the replicator from the reactor. 
We work with mass concentrations (mass per volume) throughout, so  our kinetic equations represent transport of mass, not counts.} 
The reactant concentration evolves as
\begin{align}
    \frac{d}{dt}a(t) &= \mu\phi - \sum_{i}\eta_i a(t) x_i(t) - \phi a(t).\label{eq:reactant-dynamics}
\end{align} 
The term $\mu\phi$ represents the inflow of reactant from the external source, $\sum \eta_i a(t) x_i(t)$ represents consumption of reactant by the replicators, and $\phi a(t)$ represents reactant outflow. \newstuff{Note that the reactor volume stays constant, so that the same volume enters and leaves the system per unit time. For this reason, we use the same dilution rate $\phi$ for both inflow in Eq.~\eqref{eq:reactant-dynamics} and outflow terms in Eqs.~\eqref{eq:replicator-dynamics} and~\eqref{eq:reactant-dynamics}.}

We assume that replication is 
first-order in reactant concentration $a(t)$. First-order kinetics of this kind have been observed in chemical replicators~\cite{robertson2014highly} and they are consistent with standard models of biological growth (e.g., Monod model) at low reactant concentrations~\cite[p.~43]{pepper2009environmental}. 
In addition, we do not include exchange reactions between different replicators in Eq.~\eqref{eq:replicator-dynamics} because we assume that, during active periods of replication, exchange reactions are much slower than autocatalytic growth. This does not preclude exchange reactions from becoming relevant during inactive periods, when the reactor is closed to inflows and outflows.

We define two useful quantities: the {\em total replicator concentration}, $X(t):=\sum_i x_i(t)$, and the {\em total solute concentration} $S(t):=X(t)+a(t)$.
Adding up Eqs.~\eqref{eq:replicator-dynamics}-\eqref{eq:reactant-dynamics} gives the dynamics of total solute concentration as
\begin{align}
\frac{d}{dt}S(t)=\phi\left(\mu-S(t)\right)\,,
\end{align}
which is solved by:
\begin{align}
    S(t) = S(0)e^{-\phi t}+\mu(1-e^{-\phi t}).
    \label{eq:solute-evolution}
\end{align}

Steady-state concentrations are indicated as $a^*$ and $x_i^*$ for the reactant and replicators, and $X^*$ and $S^*$ for the totals. 
In the generic case without neutrality (all $\eta_i$ are different), 
only one replicator can be present in steady state. 
Given that our system involves only a single reactant, this result corresponds to the well-known principle of `competitive exclusion' in ecology~\cite{armstrong1980competitive}. 
If all replicators are present in the initial population, the one remaining replicator in steady state is 
\begin{align}
    r = \operatorname*{argmax}_i \eta_i. 
\end{align}
In the following, we refer to the replicator species $r$ as the `winner'. 
The steady-state concentrations are given by
\begin{gather}
    a^*=\frac{\phi}{\eta_r}\quad\qquad S^*=\mu\\
    x^*_r =X^*=\mu-\frac{\phi}{\eta_r}\,\label{eq:xr-solution}
\end{gather}
as long as $x^*_r>0$ (no washout). To avoid washout, we assume that the parameters satisfy 
$\mu > \phi/\eta_r$. \newstuff{This condition states that the dilution rate is slower than
the maximum possible growth rate of the fittest species.}

\begin{figure}[t]
    \centering
    \includegraphics[width=\mycolumnwidth]{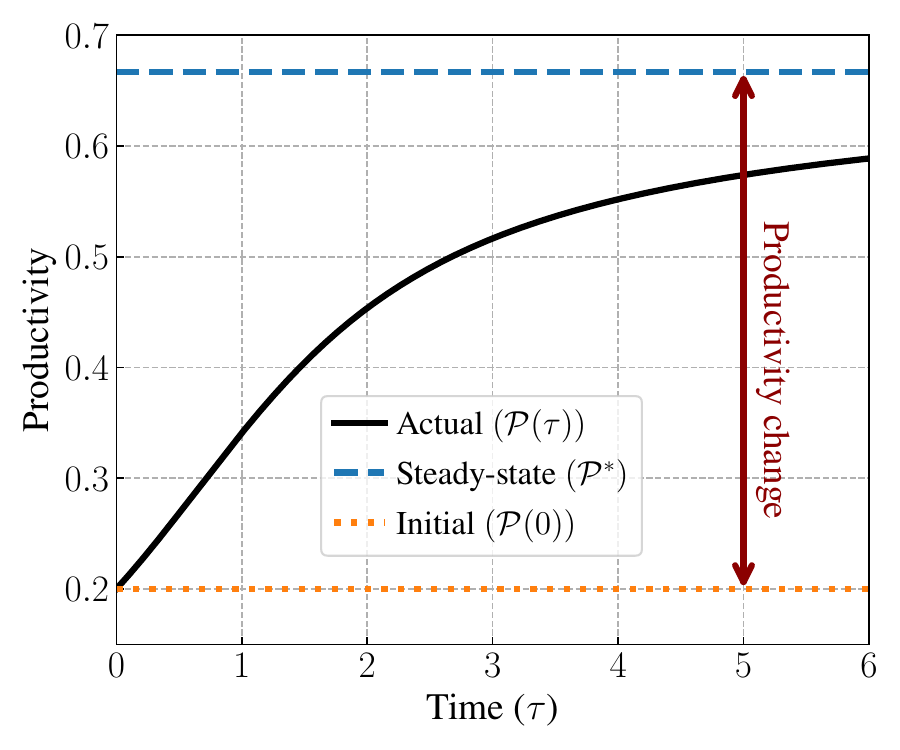}
    \caption{{\bf Productivity over time}. 
    Our result in Eq.~\eqref{eq:prod-result-1} is illustrated on a system of two replicators, with productivity defined in Eq.~\eqref{eq:productivity-def} as time-averaged outflow of replicators. Black curve is the actual productivity up to time $\tau$, orange dashed line is the initial productivity, and blue dashed line is the steady-state (long-time limit) productivity. 
     Red arrows indicate productivity change from initial  to steady-state values. 
    Parameters: $\eta_1 =2,\eta_2 = 3,\mu=1,\phi=1$; 
    initial concentrations: $a(0)=0.8\mu,x_1(0)= a(0)/5,x_2(0)=4a(0)/5$  ($S(0)=S^*=\mu$). For parameter definitions and units, see Table~\ref{table:glossary}.
}
    \label{fig:basics}
\end{figure}

\subsubsection*{Productivity}
\label{sec:productivity}

Suppose that the chemical system evolves over a time interval $t\in[0,\tau]$ from initial condition $\xx(0)=(x_1(0),\dots,x_n(0)),a(0)$. 
Our main quantity of interest is {\em productivity}, defined as:
\begin{align}
    \pr := \frac{1}{\tau}\int_0^{\tau} \phi X(t) \,dt. \label{eq:productivity-def}
\end{align}
Productivity is the time-averaged rate with which replicators flow out of the reactor, in units of mass concentration per time.  {In the motivating example shown in Fig.~\ref{fig:scheme}, \newstuff{productivity represents the rate at which replicators wash out of the pond during the active phases.} 
Since these replicators may potentially `infect' other ponds,  productivity can be imagined as a measure of fitness for replicators spreading between flow reactors.}

In the long-time limit $\tau \to \infty$,  productivity converges to its steady-state value, \begin{align}
\pr^*:=\phi x_r^*=\phi\Big(\mu-\frac{\phi}{\eta_r}\Big) \,.
\label{eq:prod-ss}
\end{align}
The assumption that $\mu>a^*$ guarantees that $\pr^*>0$. 
{To make things concrete, Fig.~\ref{fig:basics} shows  productivity over time for a simple system with two replicators. As replicator concentrations change, productivity approaches its steady-state value in the long-time limit.}

In what follows, we study how productivity $\pr$ depends on the initial concentrations $\xx(0)$ and $a(0)$. 
To do so, let us consider the winning replicator $r$. 
Dividing both sides of Eq.~\eqref{eq:replicator-dynamics} by $x_r(t)>0$ and integrating over $t\in[0,\tau]$ leads to
\begin{align}
    \ln \frac{x_r(\tau)}{x_r(0)} = \eta_r \int_0^{\tau} a(t)\,dt - \phi \tau \label{eq:r-prod-intermediate-step}.
\end{align}
Since $a(t)=S(t)-X(t)$ by definition of $S(t)$, 
\begin{align}
    \int_0^\tau a(t)\,dt = \int_0^\tau S(t)\,dt - \frac{\tau}{\phi}\pr,
\end{align}
where we used Eq.~\eqref{eq:productivity-def}. 
We integrate using Eq.~\eqref{eq:solute-evolution} and rearrange to obtain
\begin{align}
\pr = \pr^* +\frac{1-e^{-\phi \tau}}{\tau} \left(S(0)-S^*\right)+ \multAT\ln \frac{x_r(0)}{x_r(\tau)}, \label{eq:prod-result-1}
\end{align}
where $\pr^*$ is the steady-state productivity from Eq.~\eqref{eq:prod-ss}. 

Expression~\eqref{eq:prod-result-1} shows that the productivity
equals the steady-state productivity plus two correction terms.  
The first correction term in Eq.~\eqref{eq:prod-result-1} simply says that productivity increases in proportion to the excess initial solute $S(0)-S^*$ (i.e., the excess initial mass of replicator and reactants).

The last correction term in Eq.~\eqref{eq:prod-result-1} depends on the concentration change of the winner between the initial and final times, and it is a bit more subtle.  It implies that actual productivity is lower than $\pr^*$ (the steady-state productivity) when the winner's concentration increases, $x_r(\tau)>x_r(0)$. 
Intuitively, this means that replication events that increase the concentration $x_r$ within the reactor 
do not contribute to productivity (i.e., outflow). Conversely, actual productivity is larger than $\pr^*$ when $x_r(\tau)<x_r(0)$, reflecting excess initial  concentration of the winner that flows out as productivity, without having to be created by replication. 

In Fig.~\ref{fig:basics}, we provide a simple example of a system with two replicators.  {In this figure, the first correction term is zero because we set $S(0)=S^*$ for illustrative purposes; the second correction term is negative and leads to a gap between the steady-state productivity (blue dashed line) and the actual productivity (black curve). As we will see below, in cases where environmental fluctuations and internal relaxation have similar timescales, the contribution from this second term may have significant effects on long-term productivity.}

It will be useful to introduce two natural simplifying assumptions. 
{First, we assume that the initial solute concentration is approximately equal to its steady-state value:}
\begin{align}
S(0)\approx S^*=\mu\,.\label{eq:Sassump}
\end{align}
This assumption is valid for systems previously exposed to many active periods with the same dilution rate, possibly interspersed with closed periods (note that the amount of solute does not change when the reactor is closed), allowing the solute concentration to stabilize (see Fig.~\ref{fig:scheme}).
Second, we assume 
that the temporal duration $\tau$ is long enough so that the system approaches steady state, 
\begin{align}
    x_r(\tau)\approx x_r^*=X^*. \label{eq:long-tau-ss-assumption}
\end{align}
Combining these assumptions with Eq.~\eqref{eq:prod-result-1} gives 
\begin{align}
    \pr = 
    \pr^*+\multAT\ln \frac{x_r(0)}{X^*}. \label{eq:prod-result-2}
\end{align}
Importantly, while the time interval is taken to be sufficiently long so that the system reaches its steady-state value, there may still be a significant difference between the actual productivity $\pr$ and the steady-state productivity $\pr^*$, as quantified by the second term in Eq.~\eqref{eq:prod-result-2}.

The difference between $\pr$ and $\pr^*$ depends on the initial replicator concentrations. It can be further decomposed into two contributions: one due to the \emph{relative} amount of each replicator,  and one due to the \emph{total} amount of all replicators. We quantify the former using the {\em initial distribution} $q$, i.e., the normalized fraction of concentration (proportion of replicator mass) belonging to each replicator species $i$:
\begin{align}
q_i := \frac{x_i(0)}{X(0)}\,.  
\end{align}
Finally, we rewrite Eq.~\eqref{eq:prod-result-2} using this distribution as
\begin{align}
    \pr = \pr^*+\multAT\left[\ln q_r + \ln \frac{X(0)}{X^*}\right]. \label{eq:prod-result-3}
\end{align}
The term $\ln [{X(0)}/{X^*}]$ represents the productivity cost of increasing the mass of all replicators within the reactor (rather than flowing out), and it does not distinguish between different replicators. The replicator-specific term $\ln q_r$ represents the cost of having too little initial concentration on the winning replicator that eventually dominates the system. The meaning of the multiplicative factor $\phi/\tau \eta_r$ is discussed at the end of the following section.
Eq.~\eqref{eq:prod-result-3} serves as the basis of much of our analysis below.

\subsubsection*{Fluctuating environments}
\label{sec:fluctuatinv-env}
We now imagine that our system interacts with a fluctuating \emph{environment}, represented by the discrete random variable $E$. 
Each state of the environment, $\e$, occurs with probability $p_{\e}=p(E=\e)$, and it determines the replication rates $\{\eta_i\}_{i\in\{1,\ldots,n\}}$. This reflects the fact that different environments may favor different replicator species. In fact, the environment determines the winning replicator --- i.e., the species with the highest replication rate, which we indicate as $r(\e)$ --- and the steady-state concentrations $a_{\e}^*$ and $X_{\e}^*$.  For example, in terms of the scenario illustrated in Fig.~\ref{fig:scheme},  days with strong~(a) vs. weak~(c) light correspond to two different environments, which may favor different replicators (orange vs. green). As a matter of convention, we do not treat inactive phases without flow (such as the closed periods in Fig.~\ref{fig:scheme}b,d) as environments.

In principle, the dilution rate $\phi_{\e}$ and temporal duration $\tau_{\e}$ may also depend on the environment.  Although we typically consider environments with the same dilution and duration ($\phi_{\e}=\phi$ and $\tau_{\e}=\tau$ for all $\e$), one can generally imagine varying environmental durations (e.g., shorter vs. longer seasons, etc.) and flow rates. The fraction of time spent in environment $\e$ is given by $p_{\e}\tau_{\e}/\avgtau$, where we introduce the average temporal duration
\begin{align}
\avgtau  := \sum_{\e} p_{\e}\tau_{\e}\,.\label{eq:avgtaudef}
\end{align}

The environment cannot immediately affect the system's initial concentration vector $\xx(0)=(x_1(0),\dots,x_n(0))$ at $t=0$. 
However, we suppose that initial concentrations may depend on another discrete random variable $Y$, which we refer to as the  \emph{preparation}.  The preparation may represent any external variable (e.g., time of day, physical location of the reactor, previous interactions, etc.) that is correlated with the initial concentrations. For instance, in the scenario shown in Fig.~\ref{fig:scheme}, $Y$ could represent the sequence of past environments, since these may have an effect on the initial concentrations in the current environment. As another example,  $Y$ could represent different choices of initial concentrations imposed by a scientist in a laboratory setting.
We write the initial concentrations given preparation $Y=y$ as $x_{i}(0\vert y)$, and similarly for total concentration, $X(0\vert y)=\sum_i x_{i}(0\vert y)$.

The relative (normalized) initial concentration defines a conditional probability distribution:
\begin{align}
q_{i\vert y} := \frac{x_{i}(0\vert y)}{X(0\vert y)}\,.\label{eq:init-dist-cond-y}
\end{align}
In the following, we use the term {\em strategy} to refer to the conditional distribution $q$ defined in Eq.~\eqref{eq:init-dist-cond-y}.  This terminology is inspired by literature on Kelly gambling, where `strategy' refers to the  fractions of a finite resource (e.g., wealth or replicator concentration) allocated to different stochastic outcomes. 
In general, the strategy is determined by a number of parameters, which can either be intrinsic to the system or controlled by an external experimentalist. 
 
\newstuff{As an example, consider the scheme of Fig.~\ref{fig:scheme}. Here,} the strategy is implemented by the exchange reactions, which re-balance replicator concentrations during the inactive phases and thus determine replicator concentrations at the beginning of the active phases. 
The kinetics of the exchange reactions may depend on various factors (e.g., temperature of the pond,  presence of catalysts, etc.), implying that different ponds may implement different strategies. 
Furthermore, in that example, if the inactive 
\newstuff{time} is short enough so that the system does not reach complete equilibrium, then the previous environment (either weak or strong light) may influence the initial concentrations in the current active phase. In this case, the preparation $Y$ could simply represent the {\newstuff{state of the previous environment. This gives rise to an intrinsic memory about preceding active rounds that is encoded in the distribution $q$. We will return to this point below in our Application to a real-world system.} }

We use $q_{r(\e)|y}$ to indicate the relative initial concentration assigned to the winner in environment $\e$ given preparation $y$. 
Following Eq.~\eqref{eq:prod-result-3}, under environment $\e$ and preparation $y$, the productivity is given by:
\begin{align}
    \pr_{\e,y}=\pr_{\e}^*+\frac{\phi_{\e}}{\tau_{\e}\eta_{r(\e)}}\left[\ln q_{r(\e)\vert y}+\ln\frac{X(0\vert y)}{X_{\e}^*}\right].\label{eq:prod-e-y}
\end{align}
In this expression, $\pr_{\e}^*=\phi_{\e}x_{r(\e)}^*$ is the steady-state productivity in environment $\e$, where we applied Eq.~\eqref{eq:prod-ss}. \newnewstuff{We emphasize that this result holds at finite but sufficiently long durations $\tau_{\e}$, so that the system has time to approach the steady state.}

The probability of observing environment $\e$ and preparation $y$ is governed by the joint distribution $p_{\e,y}$. In addition, we also allow the possibility of inactive periods where the reactor is closed ($\phi=0$) and therefore there is no production;  {for instance, in the scenario shown in Fig.~\ref{fig:scheme}, this corresponds to the \newstuff{closed phases (b) and (d)}. \newnewstuff{We make no assumptions about the chemical reactions that occur during the inactive phases, only that the reactor is closed. In principle, it is even possible that some autocatalytic reactions continue during the inactive phases (this phenomenon occurs in the real-world photocatalytic system we consider below).}

The parameter $\alpha \in [0,1]$ indicates the fraction of time that the reactor is open. Combining,  
average productivity is given by 
\begin{align}
    \langle \pr\rangle=  \frac{\alpha}{\avgtau}\sum_{\e,y} p_{\e,y} \tau_{\e} \pr_{\e,y}\,.
    \label{eq:avgenvScale}
\end{align}
\newstuff{We note that $\langle \pr\rangle$ quantifies time-averaged productivity. For this reason, environments are weighted by their temporal duration $\tau_{\e}$ in Eq.~\eqref{eq:avgenvScale}, since the system spends more time in long-lived environments. (For more details, see~Methods.)}% \appref{app:average-productivity}.)} 

We are interested in how much the expected productivity deviates from the expected steady-state productivity that would be reached in the limit of \newnewstuff{infinitely long} environments ($\tau_{\e} \to \infty$ for all environments), 
\begin{align}
    \langle \pr^*\rangle &=\frac{\alpha}{\avgtau}\sum_{\e}p_{\e} \tau_{\e}  \pr_{\e}^*\,.\label{eq:def-ss-av-prod}
\end{align}
The difference between $\langle \pr \rangle$  and $\langle \pr ^*\rangle$ quantifies the overall cost of environmental fluctuations. 
Combining the above  and rearranging 
leads to an expression of the expected productivity, our main theoretical result (see Methods  %\appref{app:main-result-derivation} 
for a step-by-step derivation):
\begin{align}
     \langle \pr \rangle =  \langle \pr^*\rangle -\gammaConst- \Omega \, C_{\pi,q}(R\vert Y) .
     \label{eq:main-result}
\end{align}
In this expression, we have introduced the constant 
\begin{align}
    \gammaConst :=\frac{\alpha}{\avgtau}\sum_{\e,y}p_{\e,y}\multATNTe\ln \frac{X^*_{\e}}{X(0\vert y)}, \label{eq:def-gamma}
\end{align}
and the joint distribution, $\pi_{R,Y}$, for winner replicator $R$ and preparation $Y$,
\begin{align}
\pi_{r, y} & :=\frac{\alpha}{\Omega\avgtau}\sum_{\e:r(\e)=r}p_{\e,y}\multATNTe\label{eq:pi-s}
\end{align}
with the normalization constant
\begin{align}
\Omega := \frac{\alpha}{\avgtau}\sum_{r}\sum_{\e:r(\e)=r}p_{\e}\multATNTe.\label{eq:omega-s}
\end{align}
The quantity $C_{\pi,q}(R\vert Y)$ is known as the `conditional cross-entropy' in information theory,
\begin{align}
   C_{\pi,q}(R\vert Y) = -\sum_{r,y} \pi_{r, y} \ln q_{r\vert y}\,. \label{eq:C-0}
\end{align}

The term $\gammaConst$ is an additive constant that does not depend on the strategy $q$. As seen in Eq.~\eqref{eq:def-gamma}, this constant depends on the ratio of the total replicator concentration in steady state, $X^*_{\e}$, versus in the beginning of the active phase, $X(0\vert y)$. A positive $\gammaConst$ implies a decrease in productivity (i.e., decreased outflow of replicators) due to newly created replicators building up inside the reactor, rather than flowing out. Conversely, a negative $\gammaConst$ implies an increase in productivity due to
existing replicators flowing out of the reactor, without having to be newly created.

The cross-entropy term $C_{\pi,q}(R\vert Y)$ is an information-theoretic measure  that can be understood as a measure of conditional uncertainty about environmental outcomes. It is the only term that depends on the strategy $q$. As we will see in the next subsection,  
the joint distribution $\pi$ 
specifies the optimal strategy that maximizes productivity among all possible $q$.

Finally,  
the normalization constant $\Omega$ in Eq.~\eqref{eq:omega-s} multiplies the information-theoretic term, acting as 
an `effective temperature' that converts between dimensionless informational quantities (in nats) and productivity (in units of concentration per time).
To understand its physical meaning, observe that $\Omega$ is the average of terms like $\phi_{\e}/\avgtau\eta_{r(\e)}$. 
For each environment $\e$, the dilution rate $\phi_{\e}$ determines how fast concentrations within the reactor flow out as productivity.  
The denominator $\avgtau \eta_{r(\e)}$ is proportional to the number of doublings of the fittest replicator during average duration $\avgtau$. 
Thus, the contribution from the information term --- the one that depends on the strategy $q$ --- is greater when dilution rates are high, and also when winning replicators are slow (fewer doublings). 
This reflects the fact that slower replicators are less able to recover from sub-optimal initial conditions, therefore the choice of the wrong strategy will have a greater cost in terms of lost productivity.

\subsubsection*{Information decomposition and the optimal strategy}

In this section, we use information-theoretic techniques to derive a closed-form expression for the \emph{optimal strategy}. Specifically, we find the $q$ that achieves the highest productivity averaged across all environments, $\langle \pr \rangle$ from Eq.~\eqref{eq:avgenvScale}. We also calculate the maximum average productivity achieved by this strategy.

To proceed, we note that the cross-entropy in Eq.~\eqref{eq:main-result} is a nonnegative measure of information-theoretic uncertainty. It can be decomposed into a sum of three  contributions:
\begin{align}
   C_{\pi,q}(R\vert  Y)&=H_{\pi}(R)-I_{\pi}(R;Y)+D\big(\pi_{R|Y}\Vert q_{R\vert Y}\big).\label{eq:C}
\end{align}
 
The first term is the Shannon entropy of the identity of the winning replicator $R$ under distribution $\pi$,
\begin{align}
    H_{\pi}(R) =  -\sum_{r} \pi_{r} \ln \pi_{r}.
\end{align}
It quantifies the average uncertainty about $R$, given environmental fluctuations. In our results, it captures  
the productivity cost of eliminating this uncertainty by `learning' the identity of the winner.  
The second contribution is (minus) the mutual information between the winner $R$ and the preparation $Y$ under distribution $\pi$,
\begin{align}
    I_{\pi}(R;Y) = H_{\pi}(R) - H_{\pi}(R\vert Y)=\sum_{r,y} \pi_{r, y} \ln \frac{\pi_{r\vert y}}{\pi_{r}}\,.
\end{align}
It quantifies the reduction in uncertainty about the winner $R$ provided by the initial preparation $Y$.  In our results, it captures the productivity benefit provided by the \emph{side information}  in the initial preparation.  
The third term is the Kullback-Leibler (KL) divergence between the actual strategy $q_{R\vert Y}$ and the conditional distribution  $\pi_{R\vert Y}$,
\begin{align}
       D\big(\pi_{R|Y}\Vert q_{R\vert Y}\big) = \sum_{r,y} \pi_{r, y} \ln \frac{\pi_{r\vert y}}{q_{r\vert y}} \,.
\end{align}
This nonnegative quantity reflects the distribution mismatch between the actual strategy and the optimal strategy specified by $\pi$. Due to this mismatch, productivity may be low even when the initial preparation provides a large amount of side information. In simple terms, the system's initial concentration may carry a great deal of information about the environment, but the system's dynamics may not be able to exploit this information to increase productivity.

Only the third KL term in Eq.~\eqref{eq:C} depends on the initial distribution $q$, and it reaches its minimum value of zero when the strategy $q_{R\vert Y}$ matches the conditional distribution $\pi_{R\vert Y}$. 
Thus, $\pi_{R\vert Y}$ represents the optimal strategy that maximizes productivity. 
However, in a setting where $q$ can only be manipulated by a limited set of control parameters, the optimal strategy $\pi$ is not always achievable. 
In our Application to a real-world system, we provide an example in which the optimal strategy $\pi$ is not achievable in general.

%It is worth noting that, 
%We note that, under the optimal strategy $\pi_{r,y}$ in Eq.~\eqref{eq:pi-s}, each replicator is weighted by $p_{\e,y}$, i.e., by the frequency of environments and preparations that favor that replicator. 
We note that the optimal strategy $\pi_{r,y}$ in Eq.~\eqref{eq:pi-s} has a contribution from $p_{\e,y}$, the frequency of environments and preparations. 
This recalls the {\em proportional betting} strategy, known to be optimal in Kelly's operational approach to information theory~\cite{kelly1956new,cover1999elements}. However, we note that, in our setting, the optimal strategy is also biased towards replicators that are slower (smaller $\eta_{r(\e)}$) and/or undergo higher dilution rates (larger $\phi_{\e}$). This bias toward slower replicators may appear counterintuitive at first. To unpack this, observe that when a slow replicator begins with a low concentration in a favorable environment, it takes a longer time to catch up to the steady-state productivity than a fast replicator, and thus it incurs a bigger loss of productivity (the gap shown in Fig.~\ref{fig:basics}).  Therefore, to avoid incurring this loss when presented with environments which favor slow replicators, the optimal strategy gives these replicators a `head start'.

\subsubsection*{Information-theoretic productivity bounds}

We may derive two useful bounds on the average productivity. First, from the nonnegativity of KL divergence, we have the inequality
\begin{align}
    \langle \pr \rangle \leq \langle \pr^*\rangle -\gammaConst -\Omega[H_{\pi}(R)-I_{\pi}(R;Y)]=:\prm,\label{eq:second-main-result}
\end{align}
This bound is achieved by the optimal strategy, therefore Eq.~\eqref{eq:second-main-result} can be understood as the unavoidable cost of uncertainty due to environmental fluctuations.  Furthermore, since  $H_{\pi}(R)-I_{\pi}(R;Y)=H_{\pi}(R\vert Y)\ge 0$, we can also derive the weaker inequality  
\begin{align}
    \langle \pr \rangle  \leq \langle \pr^*\rangle -\gammaConst .\label{eq:second-main-result-weaker}
\end{align}
This bound is achieved by the optimal strategy under the additional assumption of perfect side information, in essence when the initial preparation places all replicator mass on the correct  replicator.

At the other extreme, we may consider the case where the preparation provides no side information about the environment, $I_{\pi}(R;Y)=0$; for example, this occurs if the system always starts with the same initial concentrations.  In this case, 
Eq.~\eqref{eq:second-main-result} becomes
\begin{align}
    \langle \pr \rangle \leq \langle \pr^*\rangle-\gammaConst -\Omega \, H_{\pi}(R)=:\prmO \label{eq:third-main-result}
\end{align} 
This bound is achieved by the optimal strategy without side information, $\pi_R$.

It is interesting to compare the optimal bounds with and without side information, 
$\prm$ from Eq.~\eqref{eq:second-main-result} versus $\prmO$ from Eq.~\eqref{eq:third-main-result}. The difference between these two  bounds is
\begin{align}
    \prm - \prmO = \Omega \, I_{\pi}(R;Y)\,.
    \label{eq:miexcess}
\end{align}
Here, $ I_{\pi}(R;Y)$ emerges as the natural operational measure of the benefit of side information for replicator systems. Eq.~\eqref{eq:miexcess} is universal, in the sense that --- apart from its dependence on the scaling factor $\Omega$ --- it does not depend on any other chemical properties of the system, such as the steady-state productivity or the constant $\gammaConst$.

\subsection*{Application to a real-world system}
\label{sec:photocat}

In this section, we propose a connection between our previous information-theoretic findings and empirically measurable quantities in a plausible real-world experimental setup. 
Drawing inspiration from recent work in prebiotic chemistry~\cite{monreal2020emergence,liu2024light}, we study photocatalytic molecular self-replicators in a flow reactor,  which provide a realistic implementation of the scenario illustrated in Fig.~\ref{fig:scheme}. Here, we also show in concrete terms how an autonomous system can implement a strategy and maintain an internal memory that serves as a source of side information.

\begin{figure}
    \includegraphics[width=\mycolumnwidth]{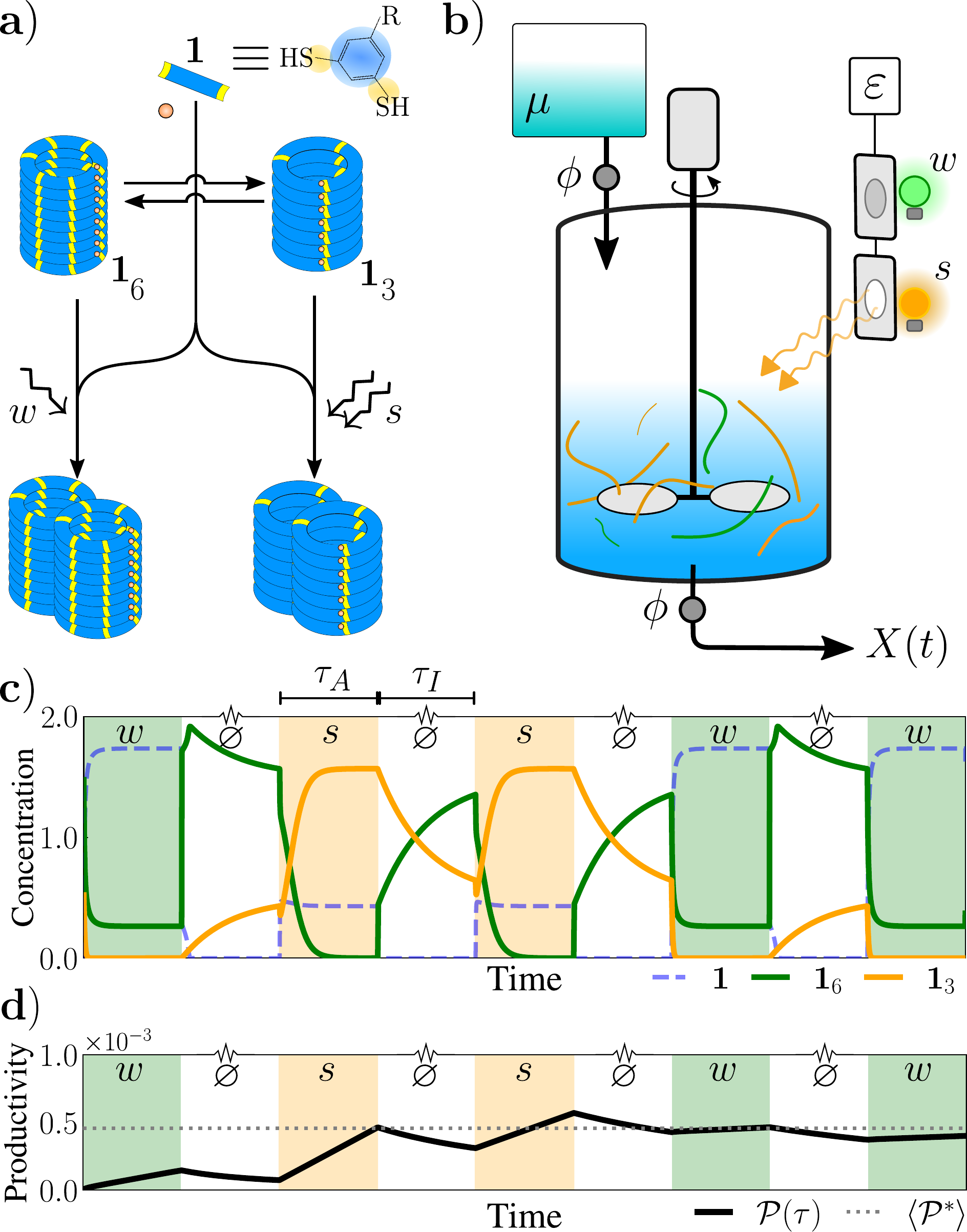}
    \caption{
    {\bf Simulated photocatalytic replicator system}.  
    \textbf{(a)} Schematic of simplified reaction network, where $\smalldim$ and  $\smallstr$  indicate replication in weak and strong light  environments, respectively (see Table~\ref{tab:reactions}). 
    \textbf{(b)} Experimental setup, including flow reactor fed by reservoir of monomers $\1$ at concentration $\mu$. 
    During active phases (weak $\smalldim$ or strong $\smallstr$ light environments), the reservoir feeds the reactor with rate $\phi$. 
    During inactive phase ($\inac$), light is switched off and the flow is stopped, allowing the system to establish a `bet' for the next active environment. 
    Productivity $\pr$ is quantified by measuring replicator concentration $X(t)$ at the outlet. \textbf{(c)} Typical concentration trajectories for monomer $\1$ and two replicators ($\1_6$ and $\1_3$), given cycles of inactive phases (white regions of length $\taui$) and randomly-chosen active environments (shaded regions of length $\taua$, with respective shaded colors green and orange for weak and strong light).  
    The timescale of the inactive phase is longer, but it is rescaled for illustrative purposes.
    Parameters used: $S(0)=S^*=\mu$, $\mu=2$, $\phi=5$, $\taua=20$, $\taui=2\times10^4$, 
    $\kappa=10^{-4}$ (giving $\lambda=2$) and $\hp=0.75$. See Table~\ref{tab:reactions} for parameter definitions and values of autocatalytic rates for $\1_6$ and $\1_3$ under different light conditions. \textbf{(d)} Productivity ($\pr$, black curve) over time from Eq.~\eqref{eq:productivity-def} and the steady-state productivity ($\langle \pr^*\rangle$, gray dotted line) from Eq.~\eqref{eq:ss-prod-photocat}.
    }
    \label{fig:setup}
\end{figure}

\subsubsection*{System and fluctuating environment}
\label{subsec:env-and-resetting}

In Ref.~\cite{liu2024light}, the authors demonstrated two self-replicating species of complex synthetic molecules (termed $\1_6$ and $\1_3$) representing hexameric and trimeric macrocycles that self-assemble  from a monomer species (termed $\1$). 
These macrocycles spontaneously stack to form respective fibers that catalyze their own production. 
Furthermore, by binding these replicators to photosensitive co-factors, the authors showed that these fibers enhance self-replication in response to different light stimuli. 
Under the right chemical conditions, replicator $\1_6$ wins in a {\em weakly lit} environment, and $\1_3$ wins in~a~{\em strongly lit} one. 
\newstuff{When the system is placed in {\em dark} conditions, autocatalysis occurs only for the hexamer $\1_6$ and at a smaller rate than with light. In addition to autocatalysis, there are also slower  
reactions that exchange matter between the two polymer populations.}
For details of the reaction scheme, see Fig.~\ref{fig:setup}a and Table~\ref{tab:reactions}.

We model this system as $n=2$ replicator species in a well-mixed reactor. 
We indicate the mass concentrations of $\1$, $\1_6$ and $\1_3$ as $a$, $x_6$, $x_3$, respectively, and use $X=x_6+x_3$ to indicate the total concentration of replicators. The reactor is coupled to an external cycle that turns light and flow {\em on} and {\em off}, which we denote as  {\em active} and {\em inactive} phases, respectively. 
Each active (open) phase has a duration $\taua$, \newstuff{during which the system approaches a nonequilibrium steady state}. During this time, the reactor is coupled to a reservoir containing reactant $\1$ at concentration $\mu$,  while inflow/outflow occurs with dilution rate $\phi$. In addition, the system is exposed to an environment with either \emph{weak light} (indicated as $\dim$), which favors replicator $\1_6$, or {\em strong light} (indicated as $\str$), which favors replicator $\1_3$. 
Each active phase is followed by an inactive phase of duration $\taui$ (indicated as $\inac$). %During this time, the flow is turned off, the system is kept in the dark, and only exchange reactions occur, 
During this time, the flow is closed off, light is removed (dark condition), and \newstuff{the system tends to (but does not necessarily reach) thermodynamic equilibrium. 
In this sense, our system alternates between two modes of control which have been previously termed `kinetic' (nonequilibrium) and `thermodynamic' (equilibrium), see Ref.~\cite{mattia2015supramolecular}.} 
Our basic setup is illustrated in Fig.~\ref{fig:setup}b.
The dynamics of the reactant and the two replicators during the active and inactive phases are described in more detail below, and are shown for illustration in Fig.~\ref{fig:setup}c. 

Importantly, the weak ($\dim$) or strong ($\str$) environments can exhibit temporal correlations.  For simplicity, we assume that the stochastic process over environments is stationary and first-order Markovian, and we use $p_{\e\vert \e_-}$ to indicate the conditional probability that a previous environment $\e_- \in \{\dim,\str\}$ is followed by the next environment $\e \in \{\dim,\str\}$ (always with an inactive phase in between). We use $p_{\e}$ to indicate the steady-state distribution of this Markov chain, which we assume has full support. We use  $p_{\e,\e_-}:=p_{\e\vert \e_-}p_{\e_-}$ to indicate the steady-state joint probability of environment $\e_-\in \{\dim,\str\}$ followed by environment $\e\in\{\dim,\str\}$. 
Therefore, we say that the environment random variable $E$ has two outcomes $\{\dim,\str\}$, which occur with respective probabilities $\{p_{\scriptdim},p_{\scriptstr}\}$. Note that the inactive phase is not treated as an environment since it does not contribute to outflow.

As we will see below, the environment during the previous active phase may influence the initial condition of the current active phase. For this reason, the previous environment $\e_-\in\{\dim,\str\}$ can serve as a source of side information $Y$. For notational convenience, we use the random variable $E_-\equiv Y$ to refer to the previous active environment. It also has two outcomes $\{\dim,\str\}$, which co-occur with the current environment $E$  according to the joint probability $p_{\e,\e_-}$.

We consider environments with different kinds of temporal correlations between  $\e$ and $\e_-$, as quantified by the sign of the correlation coefficient (see Methods)%\appref{app:correlated})
\begin{align}
c:= p_{\scriptdim,\scriptdim} - p_{\scriptdim}p_{\scriptdim}=    p_{\scriptstr,\scriptstr} - p_{\scriptstr}p_{\scriptstr}\,\label{eq:twostatecov}
\end{align}
We say that environments are (positively) \emph{correlated} when \newstuff{$c>0$}, 
\emph{uncorrelated} when \newstuff{$c=0$}, and \emph{anticorrelated} when \newstuff{$c<0$}. 
%We say that the environments are (positively) \emph{correlated} when this coefficient is strictly positive \newstuff{($c>0$)}, \emph{uncorrelated} when the coefficient~\eqref{eq:twostatecov} is equal to zero \newstuff{($c=0$)} and \emph{anticorrelated} when it is strictly negative \newstuff{($c<0$)}. 
In simple terms, the active condition (weak or strong) tends to repeat in correlated environments and to alternate in anticorrelated environments.

The production of replicators is tracked by measuring the outflow of both $\1_6$ and $\1_3$ at the outlet of the reactor. 
Since the reactor remains closed during inactivity, only active phases contribute to productivity. However, inactive phases allow the system to (partially) reset its state, thus setting up the initial condition for the subsequent active phase. As we will see below, the parameters of the exchange reactions (which still occur during inactive phases) affect the initial conditions of the subsequent active phase, thus implementing the system's strategy $q$.  We will classify the optimal strategies depending on whether the environments are correlated or anticorrelated.

Borrowing terminology from Kelly's original work~\cite{kelly1956new}: our inactive phase is interpreted as the `betting' stage, in which the system autonomously sets its strategy by preparing the initial condition for the next active round. 
Our active phase is akin to the `gambling' stage, in which the system evolves towards the  steady state dominated by the corresponding winner in the environment (light) state.

\subsubsection*{Reactions and dynamics} 
\label{subsec:reactions-dynamics-otto}
\begin{table}[t]
\centering
\begin{tabular}{lcc}    \toprule
\emph{Parameter} & \emph{Symbol} & \emph{Units} \\\midrule
 Duration of active phases (active time)   & $\taua$ & T\\
 Duration of inactive phases (inactive time) & $\taui$ & T\\
 Dimensionless inactive timescale ($\lambda=\kappa\taui$)\quad & $\lambda$ & ---\\
  \newstuff{Exchange rate constant for $\1_6\rightleftharpoons \1_3$} & \newstuff{$ \kappa $} & T$^{\shortminus1}$ \\
 \newstuff{Bias in favor of hexamer $\1_6$ in exchange reaction} & $b$ & --- \\
 \bottomrule
\end{tabular}

\vspace{7.5pt}

\renewcommand{\arraystretch}{1.5}
\setlength{\arrayrulewidth}{0.4pt}

\begin{tabular}{c|c}
\specialrule{\heavyrulewidth}{0pt}{0pt}
\begin{minipage}{4.5cm}\centering
\newstuff{\textit{Active/open phases $\left(\e\in\{\dim,\str\}\right)$}}
\end{minipage}
&
\begin{minipage}{3.25cm}\centering
\newstuff{\textit{Inactive/closed phases}}
\end{minipage}
\\
\hline

\begin{minipage}{4.5cm}\centering
$\1+\1_6 \overset{\newstuff{\eta_{6,\e}}}{\longrightarrow} \1_6+\1_6$
\end{minipage}
&
\multirow{3}{*}{
\begin{minipage}{3.25cm}\centering
\newstuff{$\1+\1_6 \overset{\eta_{6,\varnothing}}{\longrightarrow} \1_6+\1_6$}
\end{minipage}
}
\\

\begin{minipage}{4.5cm}\centering
$\1+\1_3 \overset{\newstuff{\eta_{3,\e}}}{\longrightarrow} \1_3+\1_3$
\end{minipage}
&
\\

\begin{minipage}{4.5cm}\centering
$\emptyset \overset{\mu\phi}{\longrightarrow} \1$ and
$\1,\1_6,\1_3 \overset{\phi}{\longrightarrow} \emptyset$
\end{minipage}
&
\\

\hline
\multicolumn{2}{c}{
\rule{0pt}{3.5ex}
\newstuff{$ \1_6 \overset{\kappa(1-b)}{\longrightarrow} \1_3$
and
$ \1_3 \overset{\kappa b}{\longrightarrow} \1_6$}
}
\\
\specialrule{\heavyrulewidth}{0pt}{0pt}

\end{tabular}

\vspace{7.5pt}

\centering

\renewcommand{\arraystretch}{1.0}
\begin{tabular}{lccc}
    \toprule
    \newnewstuff{\emph{Experimental values}} & \newnewstuff{\emph{Symbol}} &\newnewstuff{\emph{Value(s)}}  & \newnewstuff{\emph{Units} } \\
    \midrule
    \multirow{2}{*}{\newnewstuff{Autocatalytic rates in weak light}} 
        & \newnewstuff{$\eta_{6,w}$} & \newnewstuff{2.88} & \newnewstuff{C$^{\shortminus1}$T$^{\shortminus1}$} \\
        & \newnewstuff{$\eta_{3,w}$} & \newnewstuff{0.00} & \newnewstuff{C$^{\shortminus1}$T$^{\shortminus1}$}\\
    \faintmidrule
    \multirow{2}{*}{\newnewstuff{Autocatalytic rates in strong light} } 
        & \newnewstuff{$\eta_{6,s}$} & \newnewstuff{10.2} & \newnewstuff{C$^{\shortminus1}$T$^{\shortminus1}$}\\
        & \newnewstuff{$\eta_{3,s}$} & \newnewstuff{11.2} & \newnewstuff{C$^{\shortminus1}$T$^{\shortminus1}$}\\
    \faintmidrule
    \multirow{2}{*}{\newnewstuff{Autocatalytic rates in inactivity}} 
        & \newnewstuff{$\eta_{6,\varnothing}$} & \newnewstuff{1.76} & \newnewstuff{C$^{\shortminus1}$T$^{\shortminus1}$}\\
        & \newnewstuff{$\eta_{3,\varnothing}$} & \newnewstuff{0.00} & \newnewstuff{C$^{\shortminus1}$T$^{\shortminus1}$}\\
    \faintmidrule
    \newnewstuff{Exchange rate constant (range)} & \newnewstuff{$\kappa$} & \newnewstuff{5$\cdot$10$^{\shortminus7}-$10$^{\shortminus3}$} & \newnewstuff{T$^{\shortminus1}$} \\
    \bottomrule
\end{tabular}

\caption{ \textbf{Summary of parameters and experimental values for photocatalytic replicator model.}
\emph{Top}: Parameters used for model of photocatalytic replicators. T for units of time.
\emph{Middle}: Simplified reaction network for replicators $\1_6$ (replicator index $i=6$) and $\1_3$ (replicator index $i=3$), which self-assemble from reactant $\1$ (monomers). 
\newstuff{Active environments with weak ($\dim$) or strong ($\str$) light conditions lead to self-replication of $\1_6$ or $\1_3$, respectively. During activity, $\1$ flows into the system and all molecules outflow at the same rate. During the inactive phase, the system is closed off and only $\1_6$ can self-replicate. Exchange reactions (bottom row) occur throughout and at a slow rate. \newnewstuff{\emph{Bottom:} Summary of experimental values for autocatalytic and exchange rates extracted from~\cite{liu2022out,liu2024light}.}}
\label{tab:reactions}}

\end{table}

We describe the chemical reaction network introduced in~\cite{liu2024light} by a coarse-grained set of reactions summarized in Table~\ref{tab:reactions} middle. \newstuff{
The system has autocatalytic reactions that depend 
on the light intensity, whether dark (no light), weak, or strong. During open (active) phases, the system undergoes inflow/outflow and is exposed to an environment $\e$ with either weak $\dim$ or strong $\str$ light. During closed (inactive) phases, it is placed in the dark. 
Throughout, the system experiences slow exchange reactions (last row in the middle Table~\ref{tab:reactions})}, 
which effectively re-balance the concentrations of the two replicators.

The experiments in Ref.~\cite{liu2024light} used a flow reactor with residence time 16.7~h and inflow reactant concentration 0.5~mM. % (half of total 1~mM inflow concentration). 
Without loss of generality, below we measure time and mass concentration as multiples of these values. % (e.g., a rate of 2 is equivalent \newnewstuff{to} .12/h, a concentration of 2 is equivalent \newnewstuff{to} 1~mM, etc.). 
For reasons discussed below, in our model, we use a somewhat higher inflow reactant concentration 1~mM ($\mu=2$ in our units) and shorter residence time $3.34\,\text{h}=16.7\,\text{h}/5$ ($\phi=5$ in our units).

\newstuff{For the autocatalytic rate constants $\{\eta_{6,\e},\eta_{3,\e}\}$ with $\e\in\{\dim,\str\}$, we use the growth rates reported in Figure 35 of the Supplemental Material in~\cite{liu2024light}. At 0.2~mM initial monomer concentration, $\1_6$ replicators grew $\approx~$7\% over one hour in weak light and $\approx~$27\% over one hour in strong light, while $\1_3$ replicators grew $\approx~$30\% over one hour in strong light and did not grow in weak light. In addition, Liu~\emph{et~al.} reported that replicator $\1_6$ undergoes slow autocatalysis under dark conditions ($\varnothing$), corresponding to a nonzero autocatalytic rate constant $\eta_{6,\varnothing}>0$~\cite{liu2022out,liu2024light}.} All autocatalytic rate constants are given in Table~\ref{tab:reactions} bottom. Note that we converted percentage increases $r$ to per-capita growth rate $g$ via $g=\ln(1+r/100)$, divided by the stated monomer concentration to get the rate constants $\eta$, and expressed them in the units of time and concentration introduced above. From these rate constants, we observe that the winning replicators ($R$) and the environments ($E$) are in a one-to-one relation:  $\1_6$ wins when $\e=\dim$ and $\1_3$ wins when $\e=\str$. This implies that the two random variables are equivalent, i.e., $R\equiv E$.

\newstuff{Next, we discuss exchange reactions. We emphasize that our analysis requires only that the concentrations of different replicators undergo partial equilibration during the inactive phase, and it does not depend on the precise mechanism through which this occurs. In practice, we suppose that $\1_3$ converts into $\1_6$ at a rate of $\hp \kappa$ and $\1_6$ converts into $\1_3$ at a rate of $(1-\hp) \kappa$, where $\hp \in [0,1]$ is the {\em bias} in favor of replicator $\1_6$ and $\kappa > 0$ is the exchange rate constant. These reactions represent effective exchange due to various mechanisms, including degradation and formation, cross-catalysis, and disulfide exchange. In principle, these reactions also affect the active phases, but their influence is negligible so long as they occur at much lower rates than autocatalysis. 
This rate constant may be controlled through various methods, such as varying the reduction agents in solution (such as TCEP) and/or stirring. 
We explore a range of exchange rate constants $\kappa$ from $5\times 10^{-7}$ to $10^{-3}$ (in the units described above). 
For the bias parameter, we do not select one particular value, but rather explore a range of bias values $\hp\in[0,1]$. 
The bias toward $\1_6$ or $\1_3$ can be controlled through oxidation level (see Fig.~3a in~\cite{liu2024light}). 
}

To allow slow exchange reactions to (partially) equilibrate concentrations during inactivity, we usually assume that the inactive timescale is much longer than the active one ($\taui\gg\taua$). 
This can be imagined as periodic bursts of activity followed by long relaxation (inactive) periods. It is useful to characterize the inactive phase by a {\em dimensionless inactive timescale} defined as:
\begin{align}
\lambda := \kappa \taui.\label{eq:dimensionless_timescale}
\end{align}
In simple terms, $\lambda$ is the number of \newstuff{exchange} events during the inactive phase per replicator.

\newstuff{We select  $\taua=20$ as the duration of the active phase, % and  $\phi=5,\mu=2$ as the reactor parameters, 
which ensures that the system approaches steady state within each active phase.} 
This implies that, at the end of any active phase, all the remaining dependence on the previous history is erased. Hence, the subsequent inactive phase will only depend on the previous active environment $\e_-$. \newstuff{The duration of the inactive phase is chosen as $\taui=2\times 10^4$. }%
\newnewstuff{As we show below, this choice of $\taui$  allows partial equilibration of the system during inactive phases.} 
Given the value of $\taui$ and Eq.~\eqref{eq:dimensionless_timescale}, the exchange rate constants $\kappa \in [5\times 10^{-7},10^{-3}]$ correspond to dimensionless inactive timescales $\lambda\in[10^{-2},20]$.

We comment on a few differences between the parameter values we use in our analysis and those reported in Ref.~\cite{liu2024light}. As mentioned above, we use a shorter residence time (3.34~h versus 16.7~h, corresponding to $\phi=5$ versus $\phi=1$). This is done to more clearly show numerical productivity differences that arise from the choice of strategy (observe that in the limit of long residence times, the choice of strategy matters less and less). We also increase the monomer inflow concentration (1~mM versus 0.5~mM, corresponding to $\mu=2$ versus $\mu=1$) to avoid washout at the higher dilution rates. In addition, our exchange rate constants are likely slower than the real-world ones. In particular, experimental data from closed reactors (Figs.~36 and 37 in the Supplemental Material of~\cite{liu2024light}) suggest an exchange timescale of roughly 5--15 days, which would correspond to  $\kappa\in [0.05,0.14]$ (in our units). However, for such large exchange rates, exchange reactions begin to interfere with replicator dynamics during the active phases.  Our analysis assumes that it is possible to increase the timescale separation between replication and exchange reactions during the active phases.

In the Methods section, %Supplementary appendix B2, %\appref{app:inactive}, 
we {compute} replicator concentrations at the end of an inactive phase, expressing them as a function of $\{\hp,\lambda\}$ and the previous environment $\e_-$. 
These concentrations serve as the initial conditions of the subsequent active phases, and their relative proportions determine the strategy $q$, see Eq.~\eqref{eq:init-dist-cond-y}. 
As discussed above, $\e_{-}$ enters in $q$ as a variable that contains side information about the environmental fluctuations. In other words, the strategy is characterized by $q_{R\vert E_-}$, which itself is a function of $\{\hp,\lambda\}$.
\newstuff{The timescale $\lambda$ regulates how much the strategy captures about the previous active environment. For small $\lambda$, the vector $q_{R \vert E_-}$ is highly \newnewstuff{dependent on} the preceding environment $E_-$; for larger  $\lambda$, $q_{R \vert E_-}$ becomes independent of $E_-$. In this way, $\lambda$ controls the degree of memory that the system retains about the preceding environment. \newnewstuff{We note that the amount of memory in the system is separate from environmental correlations, as quantified by the coefficient $c$ from Eq.~\eqref{eq:twostatecov}}.}

The connection between the strategy and side information $\e_-$ can also be interpreted as an intrinsic {\em first-order memory} of the system. The memory is first-order because it only depends on the last environment, being  reset by the end of every active phase (for a visual example, see Fig.~\ref{fig:setup}c). \newstuff{As shown below, in some cases, such a memory mechanism can be exploited to increase productivity. }
In the limit of $\lambda \to\infty$, equilibrium is reached within every inactive phase. In this {\newstuff {limit}}, internal memory of $\e_-$ is effectively erased during each inactive phase, and thus can no longer be exploited. \newstuff{Then, we say that} the strategy does not make use of any side information.

\subsubsection*{Productivity and information}
\label{subsec:experimental-measures}

\begin{figure*}[ht]
    \centering
    \includegraphics[width=2.05\mycolumnwidth]{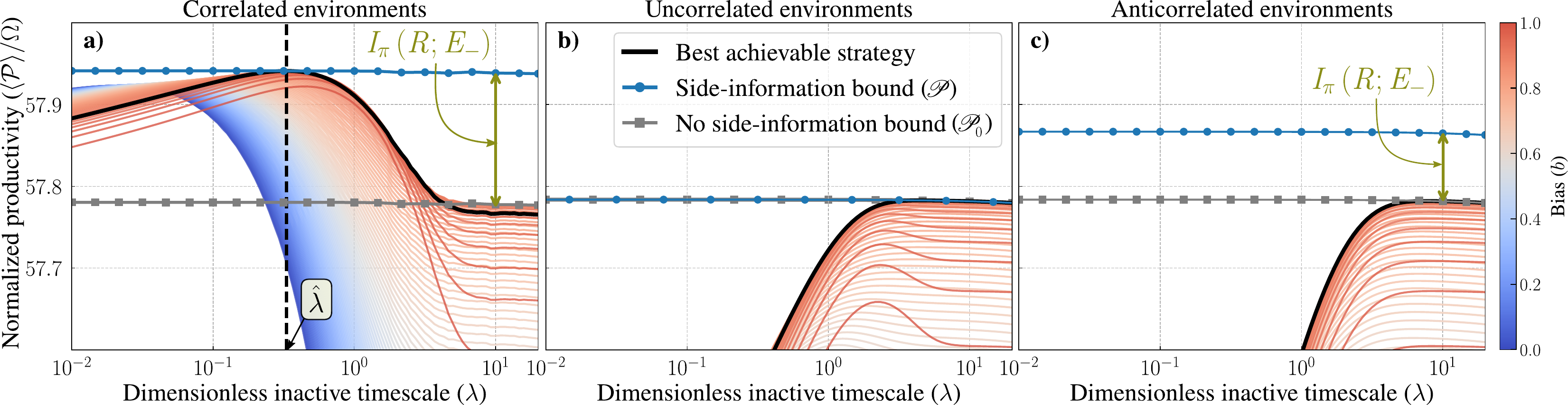}
    \caption{{\bf Information and productivity in \newstuff{simulated} photocatalytic replicator system}.
    We show normalized average productivity $\langle \pr \rangle/ \Omega$ as a function of  two control parameters: $\lambda$ (dimensionless inactive timescale from Eq.~\eqref{eq:dimensionless_timescale}; horizontal axis) and $\hp$ (\newstuff{bias in favor of replicator $\1_6$ due to exchange reactions}). \newnewstuff{Curves are colored according to the bias $\hp$, see colorbar on the right.} 
    Subplots {\bf (a)}, {\bf (b)} and {\bf (c)} correspond to temporally correlated ($c=0.175$), uncorrelated ($c=0$) and anticorrelated ($c=-0.1225$) environments, respectively. 
    \newnewstuff{The black curve in {\bf (a)} uses optimal bias $\hpopt=0.92$, obtained from Eq.~\eqref{eq:optimality0}, and in {\bf (b)} and {\bf (c)} uses the no-side-information bias $\hpopt_{\lambda \to\infty}=0.88$, obtained from Eq.~\eqref{eq:optimality0b}}. Blue lines:  productivity bounds $\prm$ from Eq.~\eqref{eq:second-main-result} with  side information about previous environment $E_-$. Gray lines:  productivity bound~$\prmO$ from Eq.~\eqref{eq:third-main-result} without side information.   
    For correlated environments~{\bf(a)}, the best achievable strategy has a finite \newnewstuff{optimal timescale $\lambdaopt$. We verify our analytical approximation of $\lambdaopt$ from Eq.~\eqref{eq:optimality0} (vertical dashed line) against the optimum found by numerics. We also verify Eq.~\eqref{eq:miexcess}:} the increase in the  productivity bound  is proportional to mutual information provided by side information.
    }
    \label{fig:prod-info}
\end{figure*}

We now calculate the average productivity for the photocatalytic replicator system. Recall from the last subsection that, due to incomplete relaxation during the inactive phase, the identity of the previous environment $\e_- \in \{\dim,\str\}$ can serve as side information for the current environment $\e\in\{\dim,\str\}$.  We then compute the average productivity as
\begin{align}
    \langle \pr\rangle= 
    \frac{\taua}{T}\sum_{\e,\e_-}  p_{\e,\e_-} \pr_{\e,\e_-}\,,
    \label{eq:avgenvPhotoCat}
\end{align}
where we used Eq.~\eqref{eq:avgenvScale} along with $\avgtau=\taua$ (from Eq.~\eqref{eq:avgtaudef}) and $\alpha=\taua/T$ (the fraction of time the reactor is open). The average steady-state productivity is given by
\begin{align}
    \langle \pr^*\rangle = \frac{\taua}{T} \big(p_{\scriptdim}\pr^*_{\scriptdim}+p_{\scriptstr}\pr^*_{\scriptstr}\big),\label{eq:ss-prod-photocat}
\end{align}
where $\pr^*_{\e}$ are obtained following the procedure discussed in the Methods. %Supplementary appendix B1. %\appref{app:active}.
\newstuff{In Fig.~\ref{fig:setup}d, productivity $\pr(\tau)$ is obtained using Eq.~\eqref{eq:productivity-def} (black curve) and compared to Eq.~\eqref{eq:ss-prod-photocat} (dotted line).}

Following the expressions given in Eqs.~\eqref{eq:main-result},~\eqref{eq:pi-s}-\eqref{eq:omega-s} and~\eqref{eq:C}, for this setup we have:
\begin{align}
    \langle \pr \rangle &= \langle \pr^*\rangle-\gammaConst -
    \Omega  \, C_{\pi,q}(R\vert E_-)\\
\gammaConst&= \frac{\phi}{T}\sum_{\e,\e_-} \frac{p_{\e,\e_-}}{\eta_{r(\e)}} \ln \frac{X_{\e}^*}{X_{\inac}(0)}\,.\label{eq:exp_gamma_term}
\end{align}
Here,  \newstuff{$X_{\inac}(0)=X(0\vert y)$ indicates the total concentration at the end of the inactive phase, and it is well-approximated by $X_{\inac}(0)\approx \mu$ (see Methods).} %Supplementary appendix B2). }%~\appref{app:inactive}).}  
We also have the decomposition of the cross-entropy,
\begin{align}
    C_{\pi,q}(R\vert E_-)&=H_{\pi}(R)-I_{\pi}(R;E_-)+ D(\pi_{R\vert E_-}\Vert q_{R\vert E_-}).
\end{align}
\newstuff{The mutual information term $I_{\pi}(R;E_-)$ vanishes when $c=0$ (uncorrelated environments) and it is strictly positive whenever $c \ne 0$ (correlated and anticorrelated environments).} 
Finally, following Eq.~\eqref{eq:pi-s}, \newstuff{and using} $\alpha/\avgtau=1/T$, we derive our optimal distribution $\pi$ as
\begin{align}
    &\pi_{6, \e_-} =  \frac{\phi}{\Omega T} \frac{p_{\scriptdim,\e_-}}{\eta_{6,\scriptdim}}, \quad 
     \pi_{3, \e_-} = \frac{\phi}{\Omega T} \frac{p_{\scriptstr, \e_-}}{\eta_{3,\scriptstr}}, \label{eq:pi_RY}
\end{align}
for $y=\e_-\in\{\smalldim,\smallstr\}$ and normalization constant
\begin{align}
\Omega = \frac{\phi}{T }\left(\frac{p_{\scriptdim}}{\eta_{6,\scriptdim}} + \frac{p_{\scriptstr}}{\eta_{3,\scriptstr}}\right). \label{eq:Omega_experiment}
\end{align}

\subsubsection*{Maximizing productivity}

As shown in Eq.~\eqref{eq:second-main-result}, productivity is maximized when the strategy $q_{R\vert E_-}$ matches the distribution $\pi_{R\vert E_-}$, at which point $\langle \pr\rangle=\prm$.  
However, in practice, one cannot always make $q_{R\vert E_-}$ equal to $\pi_{R\vert E_-}$ simply by varying the accessible control parameters chosen for this numerical experiment, namely $\{\hp,\lambda\}$.
Nonetheless, we can solve for the {\em best achievable strategy} given our set of controls. To do so, we explore how productivity varies with exchange bias $\hp$ and inactive dimensionless timescale $\lambda$.

There are two different ways of varying $\lambda$. 
For instance, one could keep \newstuff{$\kappa$} fixed and change $\taui$, the duration of the inactive phase. 
However, this affects the value of the cycle period $T$ and thus the average productivity in Eq.~\eqref{eq:avgenvPhotoCat}. 
In our example, we vary $\lambda$ by rescaling the overall \newstuff{exchange rate $\kappa$}, while keeping $\taui$ fixed. In practice, this could be accomplished by changing the temperature of the reactor, adding catalysts, etc.

In the Methods section, % Supplementary appendix B3, %\appref{app:opt}, 
we derive the best achievable strategy by expressing $q_{R\vert E_-}$ as a function of $\{\hp,\lambda\}$, and then finding the values that minimize $C_{\pi,q}(R\vert Y)$. 
It turns out that the best achievable strategy depends on whether the environments are correlated, uncorrelated, or anticorrelated.

In Fig.~\ref{fig:prod-info}, we show numerical results for normalized time-averaged productivity $\langle \pr\rangle /\Omega$ (in dimensionless units) against the inactive timescale $\lambda$.  Colored curves correspond to different bias values $\hp$. To explore correlated, uncorrelated, and anticorrelated environments, we generate environments using a Markovian process with different transition probabilities between consecutive environments \newstuff{and fixed marginals, 
$p_{\scriptdim}=0.65=1-p_{\scriptstr}$}. 
Numerical values of productivity are obtained by averaging over many simulations.

For correlated environments~\newstuff{($c>0$, Fig.~\ref{fig:prod-info}a)}, the best achievable strategy has the following bias and inactive timescale: 
\newstuff{\begin{align}
\hpopt \approx  \frac{\pi_{6\vert \scriptstr}-\theta\left(1-\pi_{3\vert\scriptdim}\right)}{\pi_{6\vert \scriptstr}+\pi_{3\vert \scriptdim}-\theta},\;\lambdaopt\approx\ln\Big(\frac{1-\theta}{1-\pi_{6\vert \scriptstr}-\pi_{3\vert \scriptdim}}\Big),
\label{eq:optimality0}
\end{align}
%which we derived in the Methods %~Supplementary appendix B3 %\appref{app:opt} 
% and 
where we introduced $\theta:={\phi}/{(\mu\eta_{3,\str})}$ for notational simplicity.} % accessible range. 
The derivation is provided in the Methods section. There, we also 
%In the Methods section, % Supplementary appendix B4, %\appref{app:correlated}, 
show that  $\pi_{6\vert \scriptstr}+\pi_{3\vert \scriptdim}<1$ when $c>0$, thus $\lambdaopt$ is well-defined. 
Under this strategy, 
productivity approaches the side information bound $\prm$ from Eq.~\eqref{eq:second-main-result}.

For uncorrelated~\newstuff{($c=0$, Fig.~\ref{fig:prod-info}b)} and anticorrelated~\newstuff{($c<0$, Fig.~\ref{fig:prod-info}c)} systems, 
the best inactive timescale diverges as $\lambdaopt\to \infty$. In this limit, the best bias is given by
\begin{align}
\hpopt_{\lambda\to\infty} 
=\pi_{6,\scriptdim}+\pi_{6,\scriptstr} = \pi_6\,.
\label{eq:optimality0b}
\end{align} \newstuff{We observe that memory decreases productivity in uncorrelated and anticorrelated environments, therefore  productivity is highest when the inactive timescale is long enough to erase all memory.  
This result appears counterintuitive at first, since in principle one can imagine a system that uses memory to increase productivity in anticorrelated environments. 
However, exploiting anticorrelations would generally  require the inactive phase to implement an `inverter'. This is not possible in this particular model, because the inactive phase dynamics are effectively one-dimensional, 
and such dynamics cannot implement an inverter (see Methods). %Supplementary appendix B2). %~\appref{app:inactive}). 
Nonetheless, in more complex chemical networks, it is possible to overcome this limitation and  exploit anticorrelations to increase productivity.}

From the information-theoretical perspective, consider the bounds derived in Eq.~\eqref{eq:second-main-result}-\eqref{eq:third-main-result}. First, we note that productivity for all environments (Fig.~\ref{fig:prod-info}a,b and c) remains below the side-information bound $\prm$ (blue line). At large $\lambda$, all environments maximize productivity at $\prmO$. This is because, as discussed above, when $\lambda \to \infty$ all memory of the environmental fluctuations is erased from the initial concentrations, and the three scenarios become equivalent. On the other hand, at low values of $\lambda$, the system has little time to re-balance during inactivity, and there is not enough time to erase the memory of the previous environment, which hinders productivity in the cases of uncorrelated and anticorrelated environments. Although smaller, this effect is also present in correlated environments when $\lambda\ll \hat{\lambda}$, in which case the system does not erase enough memory.

\newstuff{However, in correlated environments, productivity exceeds the no-side-information bound $\prmO$ (gray line) at intermediate \newstuff{values of} $\lambda$. Here, the maximum productivity is non-monotonic and peaks at the predicted values of $\{\hpopt,\lambdaopt\}$ while reaching 
the side-information bound $\prm$. Hence, the gap exhibited between this peak and the large-$\lambda$ regime approximates the difference $\prm-\prmO = \Omega I_{\pi}(R;E_-)$. After normalization, this corresponds exactly to the mutual information between $R$ and $E_-$. 
Recall that, in our example, $R\equiv E$, thus $I_{\pi}(R;E_-)=I_{\pi}(E;E_-)\approx 0.13$ (nats) is the mutual information between consecutive environment states. This gap quantifies the amount of information about the environment that the system can potentially use to maximize productivity, and serves as an empirical signature of functional information in this chemical system.}  

In contrast, for uncorrelated environments, we observe that the two bounds coincide $\prm=\prmO$. This is because when $c=0$ there is no side information, $I_{\pi}(R;E_-)=0$. Finally, in anticorrelated environments, despite having $I_{\pi}(R;E_-)>0$, the system remains constrained by the no-side-information bound, \newstuff{which is a phenomenon particular to this case study, as discussed above.}

\begin{figure}[b]
    \centering
    \includegraphics[width=\mycolumnwidth]{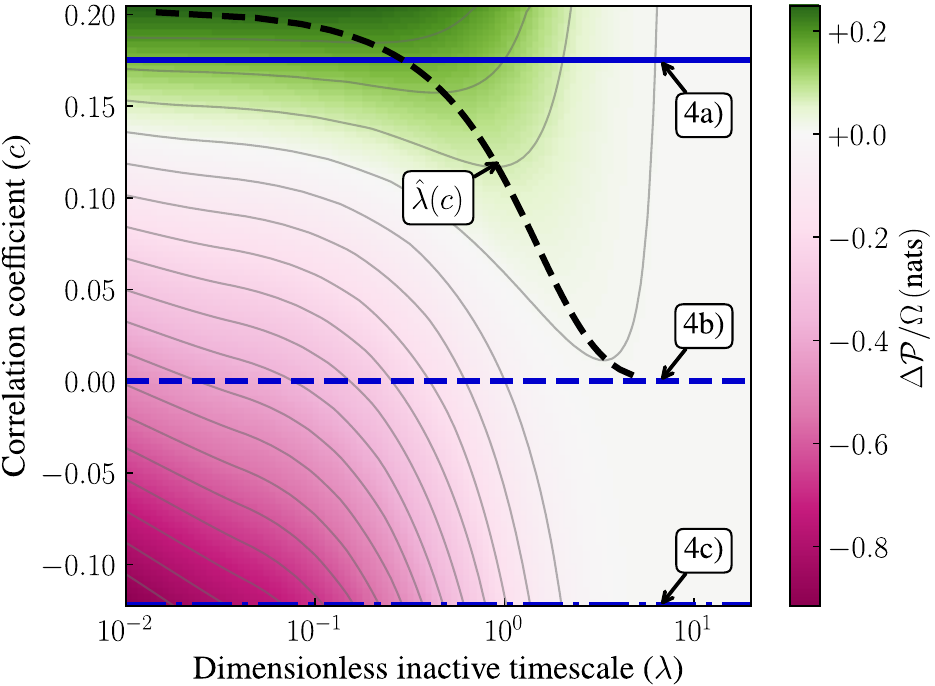}
\caption{{\bf Heat map of normalized productivity difference $\Delta \pr/\Omega $.}   The colorbar shows the values of $\Delta \pr/\Omega = \left(\max_\hp \langle \pr \rangle - \prmO\right)/\Omega$ for different values of $\lambda$ (dimensionless inactive timescale) and $c$ (correlation coefficient). Straight, dashed, and dot-dashed blue lines respectively correspond to subplots {\bf (a)}, {\bf (b)} and {\bf (c)} in Fig.~\ref{fig:prod-info}. Black dashed line indicates the values of the best achievable inactive timescale $\hat{\lambda}$ as a function of the coefficient $c$.}
    \label{fig:density-map}
\end{figure}

\newstuff{Finally,  Fig.~\ref{fig:density-map} is a heatmap that visualizes how productivity depends on the environmental correlations~$c$ and inactive timescale~$\lambda$. Here, we plot the (normalized) difference $\Delta \pr/\Omega$ between the maximum productivity $\max_\hp\langle \pr \rangle$ achievable at each value of $(\lambda,c)$, and the no-side-information bound $\prmO$. The bound $\prmO$ provides a useful baseline, since it can always be reached in the limit of long inactive time ($\lambda \to \infty$). Similarly to Fig.~\ref{fig:prod-info}, the values of $\Delta \pr/\Omega$ in Fig.~\ref{fig:density-map} quantify the informational contributions to productivity. They demonstrate an interesting interplay between environmental and internal memory. When there is internal memory (small $\lambda$, left side), $\Delta \pr/\Omega$  increases with environmental correlations (top), which can be exploited by the system, and decreases with environmental anticorrelations (bottom), for which memory only hurts productivity. Without internal memory (large $\lambda$, right side), the amount and structure of environmental correlations do not matter and $\Delta \pr/\Omega$ vanishes.   We also observe that the optimal timescale $\hat{\lambda}$ increases as $c \to 0$ from above, finally diverging to $\hat{\lambda} \to \infty$  for $c \le 0$.}

\section*{Discussion}
\label{sec:discussion}

In this paper, we established a connection between information and productivity in simple replicator systems exposed to fluctuating environments.
In particular, we showed that productivity has information-theoretic contributions arising from environmental uncertainty, side information, and the mismatch between the actual and optimal preparation strategies. 
We also derived an expression of the optimal strategy for maximizing productivity. 
Our approach extends existing ideas on informational limits on growth and selection to the realistic setting of simple chemical and biological replicators in flow reactors.

%\newstuff{We comment on a few important aspects of our approach.}

Our analysis contributes to the existing literature on information processing in biomolecular systems, including work on stochastic thermodynamics~\cite{bo2015thermodynamic,sartori2014thermodynamic,barato2014efficiency}, biochemical signaling~\cite{rhee2012application,govern2014energy}, regulation~\cite{tkavcik2011information,gregor2007probing}, and others. However, most previous research starts from the \emph{a priori} assumption that information-theoretic measures (Shannon entropy, channel capacity, etc.) are relevant for bounding functional performance (work extraction, signal transduction, etc.). In contrast, we start from a concrete setup (replicator dynamics in fluctuating environments) and then demonstrate that   
information-theoretic measures emerge as the relevant operational quantities.% In this way, we show a rigorous operational connection between information-theoretic measures and functional performance in replicator systems.

Our analysis is closely related to the seminal work by Kelly on information and multiplicative growth~\cite{kelly1956new}. 
Originally operationalized in terms of gambling, Kelly's approach has since been used to study the relationship between information, fitness, and phenotypic variability in biology~\cite{haccou1995optimal,kussell2005phenotypic,donaldson2010fitness,kobayashi2015fluctuation,mayer2017transitions,marzenOptimizedBacteriaAre2018a,bernhardt2020life,rivoire2011value,moffett2022minimal,dinis2022pareto}. However, there are several important differences between previous Kelly-type analyses and our approach, making it more directly applicable to % our approach to be directly applied to a broad range of simple
chemical and microbial replicators. 
First, we relate information to productivity in a finite flow reactor, not to unbounded exponential growth. 
Second, we demonstrate that both the `betting' and `gambling' stages of Kelly's setup can be implemented by a single continuous-time autonomous dynamical system (e.g., the dynamics of a population in a flow reactor). % simple chemical or microbial setup. 
Finally, we demonstrate that a minimal replicator system can implement an internal memory and use it as a source of side information, without any explicit sensory mechanisms.

%Interestingly, we showed that, in our system, the optimal strategy is biased toward slower-growing replicators. \newstuff{This bias provides a `head start' to offset the extra time a slow replicator needs to reach high productivity in a favorable environment.}  
%\newstuff{Moreover, this bias offers risk control for multiplicatively growing replicators, in a similar way that proportional betting is optimal in the setting of Kelly gambling}~\cite{kelly1956new}. 

%Our approach is related to previous work on {\em substitutional load} and information costs in natural selection~\cite{kimura1961natural, mcgee2022cost}.  

\newstuff{There is also an interesting connection to the concept of `substitutional load' in evolutionary biology~\cite{haldaneCostNaturalSelection1957,kimura1961natural,ewensRemarksSubstitutionalLoad1970, mcgee2022cost}. Consider a biological population containing two alleles, one of which is initially rare but has higher fitness. % which has higher fitness and is initially rare. 
Substitutional load refers to the cost of replacing the less fit allele with the fitter one by the process of natural selection, where cost is measured as decreased population fitness (proportional to the additional deaths needed to remove the less fit organisms). Kimura showed that substitutional load can be equivalently expressed as the negative logarithm of the initial proportion of the fittest allele~\cite{kimura1961natural}. Similarly, our setup considers the dynamics of a fitter replicator as it heads towards fixation in a favorable environment. 
In analogy with Kimura's substitutional load, the second (logarithmic) term in Eq.~\eqref{eq:prod-result-2} can be understood as the productivity cost  incurred in order to 
increase the concentration of the winning replicator. Note that Kimura assumed a fixed population size, whereas our analysis allows the total replicator concentration to vary over time.} \newnewstuff{(See also Ref.~\cite[pp.~160-164]{crow1970genetic} for a general discussion of substitutional load and other costs in systems with changing population sizes.)}

To contextualize our work, it is useful to distinguish three different types of selection or adaptation. %is left as a direction for future work. 
%$It would be interesting to consider systems that can autonomously optimize their strategy. In this context, it is useful to distinguish three types of adaptation or learning.}
The first, and most familiar, type is natural selection in a fixed environment. This occurs during our active phases, where the winning replicator is selected via competitive exclusion.
%This is the most familiar type of adaptation. , which is often studied using the replicator equation~\cite[Sec.~4.5]{nowak2006evolutionary} and similar models. Here, a winning replicator is often selected via competitive exclusion~\cite{smithTheoryChemostatDynamics1995}. }
%In other words, we study the problem of quantifying and decomposing functional information in a fixed replicator network. The problem of understanding how a replicator network emerges or improves over time (e.g., due to natural selection, self-organization, synthetic engineering, etc.) is left as a direction for future work.  
%In the Discussion  we speculate about possible routes through which a replicator network can behave as a single evolutionary unit and undergo higher-order adaptation.
\newstuff{The second type is competition and selection among  {replicator networks}, i.e., networks of replicating species linked by exchange reactions.  
%This higher-level process can arise due to competition and selection between different networks. 
\newnewstuff{This may be seen as a form of natural selection,  where the units of selection are entire replicator networks rather than replicator species.}  
A possible prebiotic realization may be imagined in porous systems like hydrothermal vents~\cite{baaske2007extreme}, where each pore acts as a microscopic flow reactor containing different competing replicator networks. In such cases, productivity (outflow of replicators) serves as a measure of fitness, since it governs the ability of a replicator network to colonize new pores. This allows us to imagine selection for  networks whose exchange reactions implement better strategies, thus leading to higher productivity (all else being equal) over many cycles of inactive and active phases. 
We note that, although the productivity contribution from information-theoretic terms may not be very large (only a few percent in Fig.~\ref{fig:prod-info}a), the effect on selection between networks can be very significant.} 
The third and final type of adaptation is `ontogenetic' 
improvement of a single replicator type or replicator network via autonomous (i.e., internal) mechanisms. For example, one may imagine a chemical system that performs associative learning, as illustrated in Ref.~\cite{bartlett2022provenance} using a Gray-Scott reaction-diffusion model. In the setting of replicators in fluctuating environments, one might imagine a mechanism that allows for  slow modifications of internal variables that affect exchange kinetics, in this way leading to adaptation of the strategy  to environmental statistics. In this work, we focus on productivity and information in fixed replicator systems with fixed strategies (e.g., fixed exchange reactions), so we do not consider any kind of ontogenetic adaptation.  Studying this type of adaptation in replicator networks is an interesting direction for future research.

In addition to our theoretical analysis, we illustrated our results on a  %theoretical findings by exploring a real-world model 
model of real-world photocatalytic replicators~\cite{liu2024light}. We demonstrated that this chemical system, when exposed to a fluctuating environment, can implement a strategy and maintain an internal memory of previous environments as side information, without requiring dedicated  sensory mechanisms.  Finally, we showed that productivity provides a signature of information flow in this  plausible experimental setup. 
% Second, our analysis of the photocatalytic replicators was \
In our analysis of photocatalytic replicators,  productivity depends both on the growth rates of the replicators as well as the (slower) exchange reactions that lead to re-balancing of replicator concentrations.  We argue that these exchange reactions may be interpreted as performing `information processing', in the sense that they map input states (concentrations at the end of the previous active phase) to  output states (concentrations at the beginning of the next active phase) in a way that affects  functional consequences (productivity).  From this perspective, our information-theoretic decomposition of productivity quantifies the efficacy of the network's information processing, based on the alignment between actual environmental statistics and those implicitly encoded in the strategy. 
In this way, our analysis provides a concrete illustration of how % a venue for understanding how 
simple chemical systems can store and process information in fluctuating conditions. %, and provides a simple example of memory in prebiotic self-replicators. 
We note that we focus only on first-order internal memory, where only the previous environment is tracked. Future work may consider chemical systems that maintain higher-order memories, allowing for the tracking and processing of longer environmental histories.

% In particular, 
% we did not investigate how a network of replicators might improve its strategy over time, which would represent a higher-order form of adaptation. %Rather, we studied the problem of quantifying and decomposing functional information in a fixed replicator network. 
% Moving towards understanding how a replicator network emerges or improves over time (e.g., due to natural selection, self-organization, synthetic engineering, etc.) would present a fascinating direction for future work. 
% %This direction could dire
% % involve systems that can autonomously optimize their strategies.
% }

% During the active phases considered in our setup, replicators undergo selection and the populations exhibit a form of adaptation to the current environment.  
% We assume that the strategy ---  the mechanism that rebalances the replicator population between cycles of active phases --- is fixed over time. In this 

% In particular, 
% we do not investigate how a network of replicators might improve its strategy over time, which would represent a higher-order form of adaptation. 

We finish by mentioning two other directions for future research. 

\newstuff{First, our theoretical analysis was based on a particular model of replicator dynamics: during active phases, the replicators grow with first-order kinetics from a single substrate reactant, until only a single fittest type remains at nonnegligible concentrations.  
%We do not assume that  replicators are implemented by elementary autocatalytic reactions, only that they `effectively' behave as simple first-order replicators. 
Such dynamics describe many autocatalytic reaction schemes, and even some autocatalytic networks given an appropriate separation of timescales~\cite{kolchinsky2024thermodynamics}, but other dynamical regimes are also possible. 
%Nonetheless, %although our model provides a simple and  natural starting point, 
%In future work, it would be interesting to extend the approach 
Future work may extend our analysis to more complex replicator networks, including those with multiple reactants, non-first-order growth, non-negligible exchange reactions during active phases, incomplete relaxation to steady state, and/or coexistence of multiple winner replicators. %. In general, such extensions may need to consider active phases that permit the stable coexistence of multiple replicators. % at the end of active phases. In turn, this would require a modification of our analytical treatment, which assumed that only a single fittest replicator is present at the end of each active phase.  
To make progress,  it may be possible to adapt techniques from Ref.~\cite{donaldson2010fitness}, which analyzed the ``fitness value of information'' in the presence of co-existing winning types.}

Second, here we focused entirely on deterministic chemical systems, which is justified when concentrations are large enough so that thermal fluctuations can be ignored. Extending our formalism to stochastic chemical reactions may shed light on how thermal noise influences the relationship between information and productivity. It may also suggest interesting connections between our approach and  recent results from nonequilibrium and stochastic thermodynamics. % as this may uncover relationships between thermodynamics and functional information in replicator systems.

\section*{Methods}

\subsection*{Derivation of main results}

\subsubsection*{{Average productivity as a time average, Eq.~\eqref{eq:avgenvScale}}}
\label{app:average-productivity}

\newstuff{Here, we show that $\langle \pr\rangle$, as written in Eq.~\eqref{eq:avgenvScale}, can be regarded as the system's productivity averaged over time. 
First, we consider a random variable with outcomes $\{(\e,y)\}_{\e,y} \cup \{\inac\}$, which refer to different types of time periods. Outcome $(\e,y)$ represents an active time period under environment $\e$ and preparation $y$. The outcome $\inac$ represents an inactive time period. The probability distribution over outcomes, written  $v$, is defined in terms of the fraction of time that the system spends in each type of time period:
\begin{align}
    v_{\e,y}&= \alpha p_{\e,y}\tau_{\e}/\avgtau \\
    v_{\inac} &= 1-\alpha
\end{align}
where $p_{\e,y}$ is the frequency of environment $\e$ and preparation $y$, $\tau_\e$ is length of time of environment $\e$, and $\avgtau   := \sum_{\e} p_{\e}\tau_{\e}$ is the average time length of an environment,  Eq.~\eqref{eq:avgtaudef}. 
Recall also that $\alpha$ denotes the fraction of time spent in active phases, so $1-\alpha$ corresponds to the fraction within inactive phases. It is easy to verify that $v$ is nonnegative and sums to 1, hence it is a valid probability distribution.}

\newstuff{Using this definition, $v$ can be used to calculate time averages over observables. In particular, consider the productivity observable $\pr$, where $\pr_{\e,y}$ is the productivity under  active environments $\e$ and preparation $y$ and $\pr_{\inac}=0$ during inactive phases (when the system is closed). Then, it is easy to show that $\langle \pr \rangle$ from Eq.~\eqref{eq:avgenvScale} is the expectation of $\pr$ under $v$.
}

\subsubsection*{Derivation of average productivity using cross-entropy, Eq.~\eqref{eq:main-result}}
\label{app:main-result-derivation}
Using Eqs.~\eqref{eq:prod-e-y}-\eqref{eq:def-ss-av-prod}, we write the average productivity as
\begin{align}
    \langle \pr \rangle &= \langle \pr ^*\rangle + \frac{\alpha}{\avgtau}\sum_{\e,y}p_{\e, y} \multATNTe\left[\ln q_{r(\e)\vert y} + \ln \frac{X(0\vert y)}{X_{\e}^*}\right]\nonumber\\
    &= \langle \pr ^*\rangle -\gammaConst + \frac{\alpha}{\avgtau}\sum_{\e,y}p_{\e, y} \multATNTe \ln q_{r(\e)\vert y} \,,
    \label{eq:zz99dd}
\end{align}
where in the second line we used the definition of $\gammaConst$ from Eq.~\eqref{eq:def-gamma}. 
Next, we recall 
the information-theoretic expression of cross-entropy from Eq.~\eqref{eq:C-0},
\begin{align}
    C_{\pi,q}(R\vert Y):=-\sum_{r,y}\pi_{r,y}\ln q_{r\vert y}\,.\label{eq:app-Cost}
\end{align}
Plugging in the  definition of $\pi_{r,y}$ from Eq.~\eqref{eq:pi-s} gives 
\begin{align*}
    C_{\pi,q}(R\vert Y)&= -\frac{\alpha}{\Omega\avgtau}\sum_{r,y}\sum_{\e: r(\e)=r}p_{\e,y}\multATNTe\ln q_{r\vert y} \\
    &=-\frac{\alpha}{\Omega\avgtau}\sum_{\e,y}p_{\e,y}\multATNTe\ln q_{r(\e)\vert y}\,.
\end{align*}
Combining with Eq.~\eqref{eq:zz99dd} gives $\langle \pr \rangle =  \langle \pr^*\rangle -\gammaConst- \Omega \, C_{\pi,q}(R\vert Y)$, which is Eq.~\eqref{eq:main-result}.

\subsection*{Photocatalytic replicator model}

\subsubsection*{Active phase}
\label{app:active}

During active phases, the system evolves according to:
\begin{align}
    \frac{da_{\e}}{dt}=& \ (\mu-a_{\e})\phi-\left(\eta_{6,\e}x_{6,\e}+\eta_{3,\e}x_{3,\e}\right)a_{\e} ,\label{eq:Act-dyn-a}\\
    \frac{dx_{6,{\e}}}{dt}=&\left(\eta_{6,\e}a_{\e}-\phi\right)x_{6,{\e}}+\newstuff{\kappa \hp}x_{3,\e}-\newstuff{\kappa(1-\hp)}x_{6,\e} ,\label{eq:Act-dyn-u}\\
    \frac{dx_{3,{\e}}}{dt}=&\left(\eta_{3,\e}a_{\e}-\phi\right)x_{3,\e}+\newstuff{\kappa(1-\hp)}x_{6,\e}-\newstuff{\kappa \hp}x_{3,\e},\label{eq:Act-dyn-v}
\end{align}
\newstuff{where we recall $a_{\e}$, $x_{6,\e}$ and $x_{3,\e}$ denote the concentrations of the reactant ($\1$) and the hexameric ($\1_6$) and trimeric ($\1_3$) replicators. We also} recall that we prepare the system such that $S(0)=S^*=\mu$ (for example, by letting the system flow at $\phi$ before starting the experiment). Hence, at all times we have that
\begin{align}
    S=a_{\e}+x_{6,\e}+x_{3,\e}=a_{\e}+X_{\e}=\mu.\label{eq:constant-solute-condition}
\end{align}
In our setup, initial conditions for an active phase are given by the final concentration values from the previous inactive state, which we discuss next. 
We solve equations~\eqref{eq:Act-dyn-a},~\eqref{eq:Act-dyn-u} and~\eqref{eq:Act-dyn-v} numerically using the Runge-Kutta method. 
As an example, Fig.~\ref{fig:active-phase-trajs}a shows the computed trajectories for $x_{6,\scriptstr}(t)$ and $x_{3,\scriptstr}(t)$ under strong light, $\e=\smallstr$.

When needed, steady-state values for $x^*_{6,\scriptdim}$ and $x^*_{3,\scriptstr}$ can be estimated using Eq.~\eqref{eq:xr-solution}.

\begin{figure}
    \centering
    \includegraphics[width=\mycolumnwidth]{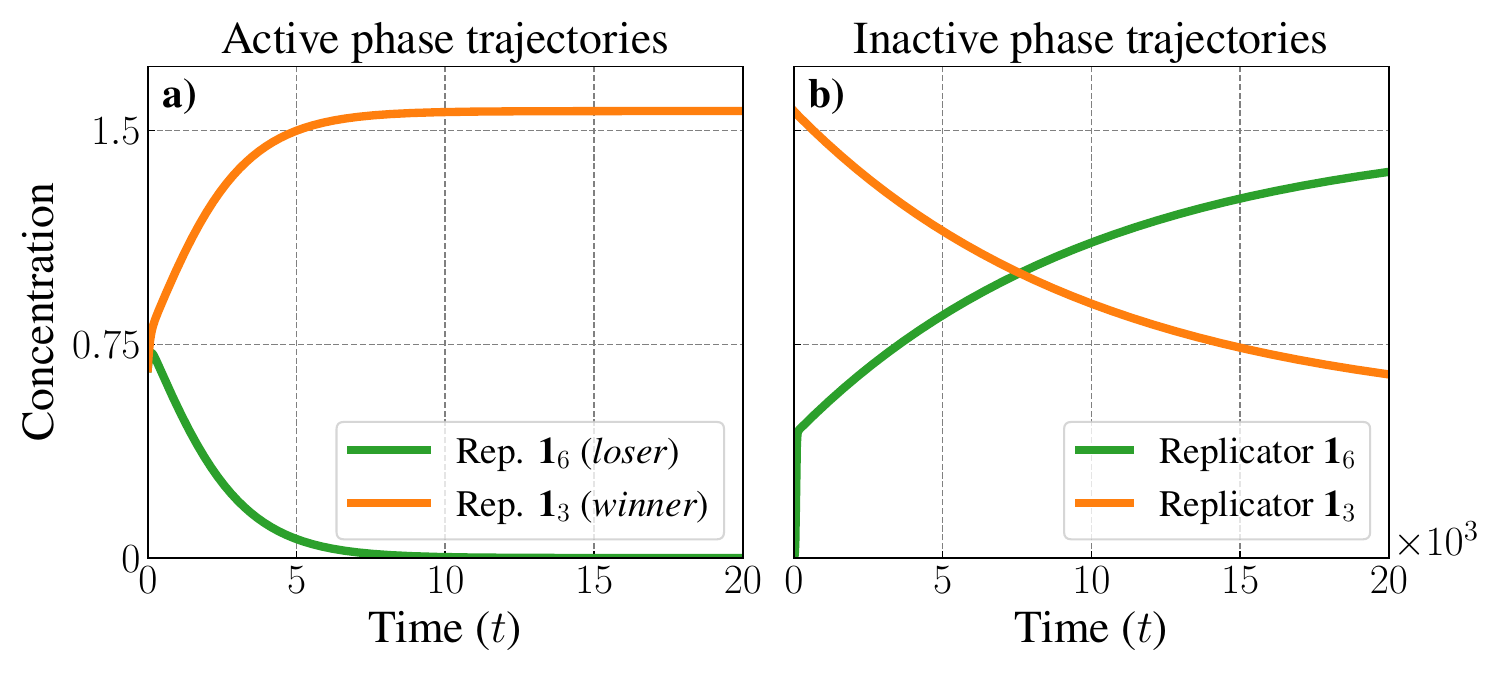}
    \caption{{\bf Concentration trajectories}. For the set of parameters introduced in  our Application to a real-world system: {\bf (a)} Concentration trajectories during an active phase with $\e=\smallstr$, with $a(0)=x_6(0)=x_3(0)=\mu/3$. {\bf (b)} Concentration trajectories during an inactive phase proceeding from the endpoints of the trajectories in the left panel. The timescale difference between the subplots reflects the dominant rates in each phase.}\label{fig:active-phase-trajs}
\end{figure}

\subsubsection*{Inactive phase}
\label{app:inactive}

During inactive phases, the system evolves according to:
\begin{align}
    \frac{da_{\inac}}{dt}&=-\eta_{6,\inac}a_{\inac}x_{6,\inac},\label{eq:A-dyn-a}\\ 
    \frac{dx_{6,\inac}}{dt}&=\eta_{6,\inac}a_{\inac}x_{6,\inac}+\kappa \hp x_{3,\inac}-\kappa(1-\hp) x_{6,\inac} ,\label{eq:A-dyn-u}\\
    \frac{dx_{3,\inac}}{dt}&=\kappa(1-\hp) x_{6,\inac}-\kappa \hp x_{3,\inac} \,.\label{eq:A-dyn-v}
\end{align}

Solving the above ODEs in closed form is challenging. However, for our parameter values, we may approximate them with a simpler and solvable system. 
During the inactive phase, exchange reactions are much slower than autocatalytic replication of $\1_6$, because $\kappa \ll \eta_{6,\inac}\mu$. Hence, any reactant is quickly converted into $\1_6$. This effect can be seen in the inactive phases shown in Fig.~\ref{fig:setup}c or in the initial increase in $\1_6$ concentration shown in Fig.~\ref{fig:active-phase-trajs}b.  Therefore, we may assume that $a_\inac\approx0$ throughout. Moreover, since  $a_{\inac}+X_{\inac}=S^*=\mu$ holds at all times, we have that 
\begin{align}
x_{6,\inac}+x_{3,\inac}\approx \mu\,,\label{eq:x6_x3_mu_condition}
\end{align}
where we used $X_{\inac}=x_{6,\inac}+x_{3,\inac}$. We now consider the system of equations
\begin{align}
    \frac{dx_{6,\inac}}{dt}&\approx\kappa \hp x_{3,\inac}-\kappa(1-\hp) x_{6,\inac} ,\label{eq:A-dyn-u-approx}\\
    \frac{dx_{3,\inac}}{dt}&\approx \kappa(1-\hp) x_{6,\inac}-\kappa \hp x_{3,\inac} \,.\label{eq:A-dyn-v-approx}
\end{align}
The dynamical system above is solved by
\begin{align}
    x_{6,\inac}(t|\e_-)&\approx \hp \mu + \left[x_{6,\inac}(0|\e_-)-\hp \mu\right]e^{-\kappa t}, \label{eq:A-u-inact-evol} \\
    x_{3,\inac}(t|\e_-)&\approx (1-\hp) \mu + \left[x_{3,\inac}(0|\e_-)-(1-\hp) \mu\right]e^{-\kappa t}, \label{eq:A-v-inact-evol}
\end{align}
where we plugged in Eq.~\eqref{eq:x6_x3_mu_condition}.
For the initial conditions, we assume that all the reactant is quickly turned into $\1_6$, thus
\begin{align}
    x_{3,\inac}(0|\e_-)&\approx x_{3,\scriptstr}^*\delta_{\e_-\scriptstr}\,,\\
    x_{6,\inac}(0|\e_-)&\approx \mu - x_{3,\inac}(0|\e_-)=\mu-x_{3,\scriptstr}^*\delta_{\e_-\scriptstr}\,.
\end{align}
We later use Eq.~\eqref{eq:xr-solution} to approximate $x_{3,\scriptstr}^*\approx \mu -\phi/\eta_{3,\scriptstr}$.

The term $e^{-\kappa t}$ that appears in Eq.~\eqref{eq:A-u-inact-evol}-\eqref{eq:A-v-inact-evol} controls the tradeoff between the initial (square brackets) and inactive steady-state concentrations. At the end of the inactive phase $t=\taui$,  $e^{-\kappa t}\big|_{t=\taui}=e^{-\lambda}$, hence the dimensionless inactive timescale $\lambda=\kappa \taui$ controls this tradeoff. Fig.~\ref{fig:active-phase-trajs}b shows typical trajectories for $\e_-=\smalldim$.

We now note that, due to the condition~\eqref{eq:x6_x3_mu_condition},  the dynamics over $x_{6,\inac},x_{3,\inac}$ is effectively one-dimensional. Such dynamics cannot implement an `inverter' that maps higher initial $x_6$ to higher final $x_3$ and vice versa. This is because an inverter would require the difference $x_6-x_3$ to evolve in different directions depending on initial conditions, which is impossible in a one-dimensional system (see also Ref.~\cite{owen2019number}). For this reason, the system cannot exploit environmental anticorrelations to increase productivity.

Finally, we define the concentrations at the end of an inactive phase as functions of $\{\hp,\lambda\}$. We condition on the state of the previous active phase by substituting for $t=\taui$ in Eqs.~\eqref{eq:A-u-inact-evol} and~\eqref{eq:A-v-inact-evol}. 
In particular, it will be useful to consider the endpoints of the inactive phase trajectories of the concentrations of $\1_6$ and $\1_3$ when preceded by light intensity $\e_-=\str$ and $\e_-=\dim$, respectively. For convenience, we introduce the notation $x^{\e_-}_{i,\inac}(\hp,\lambda):=x_{i,\inac}(\taui \vert \e_-)$ for $i\in\{6,3\}$.  
We also introduce the dimensionless quantity \begin{align}
    \theta &:= \frac{\phi}{\mu\eta_{3,\scriptstr}}\label{eq:smthetadef}
\end{align}
to quantify the ratio of dilution to maximal growth of replicator $i=3$ (washout occurs when $\theta > 1$).
Then, 
\begin{align}
    x^{\scriptstr}_{6,\inac}(\hp,\lambda)&\approx \left[\hp  -\left(\hp - \theta\right) e^{-\lambda} \right]\mu\,,
    \label{eq:x1_inac_dim}\\
    x^{\scriptdim}_{3,\inac}(\hp,\lambda)&\approx (1-\hp)\left(1-e^{-\lambda}\right)\mu\,.
    \label{eq:x2_inac_dim}      
\end{align}

\subsubsection*{Best achievable strategy}
\label{app:opt}

In order to study the best achievable strategy, we recall from our main result, Eq.~\eqref{eq:main-result}, that all the dependence on the strategy $q$ is encoded in our information-theoretic cost $C_{\pi,q}(R\vert Y)$, given in Eq.~\eqref{eq:C}. 

In our real-world example introduced above, %Sec.~\ref{sec:photocat}, 
the parameters that control $q$ are $\{\hp,\lambda\}$, i.e., $q=q(\hp,\lambda)$. In general, there may not be values of $\{\hp,\lambda\}$ such that $q$ equals $\pi$ and thus achieves maximum productivity. However, we can still optimize the contribution in Eq.~\eqref{eq:C} in each case. 

Let us write the strategy, conditional on the previous environment states, as fractions of respective concentrations evaluated at the end of the inactive phase:
\begin{align}
    q_{6\vert \scriptstr}(\hp,\lambda) &= \frac{x^{\scriptstr}_{6,\inac}(\hp,\lambda)}{\Xstrin(\hp,\lambda)}\newstuff{\approx \hp -\left(\hp-\theta\right)e^{-\lambda}}, \label{eq:q1-str-approx}\\
    q_{3\vert \scriptdim} (\hp,\lambda) &= \frac{x^{\scriptdim}_{3,\inac}(\hp,\lambda)}{\Xdimin(\hp,\lambda)}\newstuff{\approx(1-\hp) \left(1-e^{-\lambda}\right)}.\label{eq:q2-dim-approx}
\end{align}
with $q_{3\vert \scriptstr}=1-q_{6\vert \scriptstr}$ and $q_{6\vert \scriptdim}=1-q_{3\vert \scriptdim}$,  where we used Eqs.~\eqref{eq:x1_inac_dim}-\eqref{eq:x2_inac_dim} \newstuff{and $\theta$ from Eq.~\eqref{eq:smthetadef}}.

We now use the expressions above to solve for the best achievable strategy. As described above, the best strategy is obtained by minimizing $C_{\pi,q}(R|Y)$ 
with respect to the bias $\hp$ and the dimensionless inactive timescale $\lambda$. First, we write the cross-entropy term as a function of $\{\hp,\lambda\}$ by plugging in the approximations~\eqref{eq:q1-str-approx}-\eqref{eq:q2-dim-approx} into Eq.~\eqref{eq:C-0} and then use the conditional distribution $\pi_{R\vert Y}$, which yields
\begin{align}
    C_{\pi,q}(R\vert Y) \approx &- \pi_{6\vert \scriptdim}\pi_{\scriptdim}\ln \left[1-(1-\hp)\left(1-e^{-\lambda}\right)\right]\nonumber\\
    &-\pi_{6\vert \scriptstr}\pi_{\scriptstr}\ln\left[\hp-(\hp-\theta)e^{-\lambda}\right]\nonumber\\
    &-\pi_{3\vert \scriptdim}\pi_{\scriptdim}\ln\left[(1-\hp)\left(1-e^{-\lambda}\right)\right]\nonumber\\
    &-\pi_{3\vert \scriptstr}\pi_{\scriptstr}\ln\left[1-\hp+(\hp-\theta)e^{-\lambda}\right].\label{eq:approxC}
\end{align}
Next, we use $\pi_{6\vert \scriptdim}=1-\pi_{3\vert\scriptdim}$, $\pi_{3\vert \scriptstr}=1-\pi_{6\vert\scriptstr}$, and $\pi_{\scriptstr}+\pi_{\scriptdim}=1$. We find 
the optimal bias and dimensionless timescale by taking derivatives and setting them to zero: 
\begin{align*}{\partial_\hp}C_{\pi,q}(R\vert Y)\vert_{\hp=\hpopt} = 0\,,\quad{\partial_\lambda}C_{\pi,q}(R\vert Y)\vert_{\lambda=\lambdaopt} = 0\,.
\end{align*}
With a bit of algebra (or with software like Mathematica), this system of equations can be solved to give 
\newstuff{
\begin{align}
    \begin{aligned}
\hpopt &\approx  \frac{\pi_{6\vert \scriptstr}-\theta\left(1-\pi_{3\vert\scriptdim}\right)}{\pi_{6\vert \scriptstr}+\pi_{3\vert \scriptdim}-\theta},\\
\lambdaopt&\approx-\ln\left(\frac{1-\pi_{6\vert \scriptstr}-\pi_{3\vert \scriptdim}}{1-\theta}\right).
\end{aligned}
\end{align}
}
\newstuff{When there is no washout ($\theta<1$),} this solution is not valid for uncorrelated and anticorrelated systems, for which $\pi_{6\vert \scriptstr}+\pi_{3\vert \scriptdim}\ge 1$,
because the critical point is outside of the valid parameter region $(\hp,\lambda)\in[0,1]\times \mathbb{R}^+$. Therefore, for uncorrelated and anticorrelated systems, the minimum of $C_{\pi,q}(R\vert Y)$ must either be achieved at the boundaries ($\hp=0$ or $\hp=1$ and $\lambda=0$), or not achieved so that $C_{\pi,q}(R\vert Y)$ continually decreases as $\lambda\to\infty$. However, given Eq.~\eqref{eq:approxC}, we note that for $\hp\to 0$, $\hp\to 1$, and $\lambda\to0$, $C_{\pi,q}(R\vert Y)\to +\infty$ due to the $\ln(0)$ terms. Hence, the minimum cannot be achieved at the boundaries, which means that the best timescale for uncorrelated and anticorrelated environments diverges,
\begin{align}
    \lambda \to \infty \,.
\end{align}
Moreover, by studying $\lim_{\lambda\to\infty}C_{\pi,q}(R\vert Y)$ as a function of $\hp$ and minimizing, we find:
\begin{align}
    \hpopt_{\lambda\to \infty}=\pi_{6,\scriptstr}+\pi_{6,\scriptdim}=\pi_6.
\end{align}

\subsubsection*{Conditions on $p$ and $\pi$ in correlated vs. anticorrelated environments}
\label{app:correlated}

\newcommand\pI{(i)}
\newcommand\pII{(ii)}
\newcommand\pIII{(iii)}
\newcommand\pIV{(iv)}

\newcommand\pIs{(i)}
\newcommand\pIIs{(ii)}
\newcommand\pIIIs{(iii)}
\newcommand\pIVs{(iv)}

Here we derive conditions on $p$ and $\pi$ in correlated vs. anticorrelated environments. We will make repeated use of the following general result. 
\begin{prop}
\label{prop:1}
Let $\omega_{AB}$ be a joint probability distribution over two binary random variables: $A$ with outcomes $\{1,2\}$ and $B$ with outcomes $\{\downarrow,\uparrow\}$. If the marginals $\omega_A$ and $\omega_B$ have full support, the following four statements are equivalent:
\begin{align*}
\text{\pIs\;\;} \omega_{1,\downarrow} &> \omega_{1}\omega_{\downarrow}\quad & & \text{\pIIIs\;\;} \omega_{1\vert \downarrow}+\omega_{2\vert \uparrow} > 1\\
\text{\pIIs\;\;} \omega_{2,\uparrow} &> \omega_{2}\omega_{\uparrow} \quad& &
\text{\pIVs\;\;} \omega_{\downarrow\vert 1}+\omega_{\uparrow\vert 2} > 1
\end{align*}
\end{prop}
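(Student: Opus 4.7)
The plan is to show that all four statements are equivalent to a single ``core'' condition on the $2\times 2$ joint distribution, namely the positivity of the $2\times 2$ determinant
\[
\Delta := \omega_{1,\downarrow}\,\omega_{2,\uparrow} - \omega_{1,\uparrow}\,\omega_{2,\downarrow}.
\]
Once everything is funneled through $\Delta$, the equivalences fall out of simple algebra. Throughout I will use only the two facts that $\omega_{AB}$ sums to $1$ and that the marginals have full support (so we may freely divide by $\omega_1,\omega_2,\omega_\downarrow,\omega_\uparrow$).

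First I would handle (i) $\Leftrightarrow$ (ii). Expanding the marginals and using $\omega_1+\omega_2=1=\omega_\downarrow+\omega_\uparrow$, one checks by direct computation that
\[
\omega_{1,\downarrow}-\omega_1\omega_\downarrow \;=\; \omega_{2,\uparrow}-\omega_2\omega_\uparrow \;=\; \Delta,
\]
so (i) and (ii) are literally the same inequality $\Delta>0$. (This is the standard observation that in a $2\times 2$ contingency table there is only one independent covariance entry.) I would include this as a one-line calculation.

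Next I would show (i) $\Leftrightarrow$ (iii). Dividing (i) by $\omega_\downarrow>0$ gives $\omega_{1\vert\downarrow}>\omega_1$; since (i)$\Leftrightarrow$(ii) by the previous step, dividing (ii) by $\omega_\uparrow$ gives $\omega_{2\vert\uparrow}>\omega_2$; adding these and using $\omega_1+\omega_2=1$ yields (iii). For the converse, I rewrite (iii) as $\omega_{1\vert\downarrow}>1-\omega_{2\vert\uparrow}=\omega_{1\vert\uparrow}$, clear denominators, and expand the marginals to recover $\Delta>0$, which is (i). The argument for (i) $\Leftrightarrow$ (iv) is identical after swapping the roles of $A$ and $B$, noting that $\Delta$ is invariant under this swap.

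I do not anticipate a real obstacle: the only thing to be careful about is bookkeeping. The binary nature of both variables is essential — it is what collapses the four apparently different covariance entries, as well as the ``complementary'' conditionals $\omega_{1\vert\uparrow}=1-\omega_{2\vert\uparrow}$, onto a single quantity $\Delta$. Full-support of the marginals is used only to justify the divisions; without it, one or more of the conditional statements (iii)–(iv) would be ill-defined. I would end by noting that the same argument with strict inequalities reversed gives the anticorrelated case used in the main text, and with equality gives the product (independent) case, so this proposition cleanly classifies the sign of temporal correlations invoked around Eq.~\eqref{eq:correlatedDef}.
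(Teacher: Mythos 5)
Your proposal is correct and follows essentially the same route as the paper's proof: the forward implications are obtained by dividing (i) and (ii) by the appropriate marginals and adding (using $\omega_1+\omega_2=1$), and (i)$\Leftrightarrow$(ii) rests on the fact that a $2\times 2$ table has a single independent covariance entry. Your packaging of that entry as the determinant $\Delta$, and your direct converse for (iii) (rewriting it as $\omega_{1\vert\downarrow}>\omega_{1\vert\uparrow}$ and clearing denominators), are minor variations on the paper's expand-and-cancel computation and its logical argument that (iii) forces at least one, hence both, of the equivalent statements (i) and (ii).
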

\begin{proof}
    First, we rewrite both sides of \pI{} as
\begin{align}
    (1-\omega_{2}-\omega_{\uparrow}+\omega_{2,\uparrow})>(1-\omega_{2})(1-\omega_{\uparrow}).
\end{align}
Expanding and canceling terms shows equivalence with \pII{}. 

To show equivalence with \pIII{}, we divide both sides of \pI{} by $\omega_\downarrow >0$ to 
give $\omega_{1\vert \downarrow} > \omega_{1}$ and both sides 
of \pII{} by $\omega_\uparrow >0$ to give $\omega_{2\vert \uparrow} > \omega_2$. Adding both inequalities and using $\omega_1+\omega_2=1$ implies \pIII{}. To show the reverse implication, observe that \pIII{} can only be true if at least one of {\pI{},\pII{}} is true. However, since \pI{} and \pII{}  are equivalent, they must both be true when \pIII{} holds.

Equivalence of \pIV{} is derived in a similar way, by dividing \pI{} by $\omega_1 >0$ and \pII{} by $\omega_2 > 0$, then adding the inequalities.
\end{proof}

Eq.~\eqref{eq:twostatecov} follows from Prop.~\ref{prop:1} by taking $A=E$ ($1=\dim,2=\str$) and $B=E_-$ ($\downarrow=\dim,\uparrow=\str$), then using the equivalence of \pI{} and \pII{}. 

Next, we use Prop.~\ref{prop:1} to prove the equivalence between the statement 
\begin{align}
p_{\scriptdim,\scriptdim} > p_{\scriptdim}p_{\scriptdim},\label{eq:p_corr_ineq}
\end{align}
\newstuff{which corresponds to $c>0$, given its definition in Eq.~\eqref{eq:twostatecov}}, and the statement
\begin{align}
    \pi_{6\vert \scriptstr}+\pi_{3\vert \scriptdim}<1\,,\label{eq:biasvalid}
\end{align}
which is used below to derive the best achievable strategy. 
Eq.~\eqref{eq:p_corr_ineq} can be put in the form of Prop.~\ref{prop:1}\pI{}, which is equivalent to Prop.~\ref{prop:1}\pIV{},
\begin{align}
p_{E_-=\scriptdim\vert E=\scriptdim} + p_{E_-=\scriptstr\vert E=\scriptstr} > 1\,.
\label{eq:appn11}
\end{align}
Next, we rearrange Eq.~\eqref{eq:pi_RY} to show
\begin{align*}
    \pi_{6}=\frac{\phi}{\Omega T}\frac{p_\scriptdim}{\eta_{6,\scriptdim}} &\implies \pi_{\e_-\vert 6}=p_{E_-=\e_-\vert E=\scriptdim}\\
        \pi_{3}=\frac{\phi}{\Omega T}\frac{p_\scriptstr}{\eta_{3,\scriptstr}} &\implies \pi_{\e_-\vert 3}=p_{E_-=\e_-\vert E=\scriptstr}\,,
\end{align*}
therefore Eq.~\eqref{eq:appn11} is equivalent to
\begin{align}
 \pi_{\scriptdim \vert 6} +  \pi_{\scriptstr\vert 3} > 1\,.
\label{eq:appn12}
\end{align}
Third, we apply Prop.~\ref{prop:1} to the joint distribution $\pi_{RY}$, taking $A=R$  and $B=E_-$ ($\downarrow=\dim,\uparrow=\str$). We then have the equivalence of Prop.~\ref{prop:1}\pIV{} in Eq.~\eqref{eq:appn12} and Prop.~\ref{prop:1}\pIII{}:
\begin{align}
\pi_{6\vert \scriptdim}+\pi_{3\vert \scriptstr}>1\,.
\label{eq:appn13}
\end{align}
Eq.~\eqref{eq:biasvalid} follows from Eq.~\eqref{eq:appn13} and $\pi_{6\vert \scriptdim}+\pi_{3\vert \scriptdim}=\pi_{6\vert \scriptstr}+\pi_{3\vert \scriptstr}=1$.

\subsubsection*{\newstuff{Environment transition probabilities and bounds on the correlation $c$}}
\label{app:map-to-c-x}

\newstuff{Here we compute the transition probabilities between environments $p_{\e,\e_-}$ as a function of the marginal probability $p_{\smalldim}$ and correlation $c$. 
First, we use $p_{\e,\e_-} = p_{\e|\e_-}p_{\e_-}$ to write:
\begin{align*}
 p_{\scriptdim,\scriptdim}  = \left(1-p_{\scriptstr|\scriptdim}\right)p_{\scriptdim} \;\; \textrm{and}\;\; 
p_{\scriptstr,\scriptstr}  = \left(1-p_{\scriptdim|\scriptstr}\right)(1-p_{\scriptdim})
\end{align*}
Using Eq.~\eqref{eq:twostatecov}, we can write $c$ in two ways:
\begin{align}
    c&=p_{\scriptdim,\scriptdim}-p_{\scriptdim}^2 = p_{\scriptdim}(1-p_{\scriptdim}-p_{\scriptstr|\scriptdim})\\
c &=p_{\scriptstr,\scriptstr}-p_{\scriptstr}^2= (1-p_{\scriptdim})(p_{\scriptdim}-p_{\scriptdim|\scriptstr}).
\end{align}
From the above, it is possible to obtain
\begin{align}
    p_{\scriptstr|\scriptdim} &= 1-p_{\scriptdim} -\frac{c}{p_{\scriptdim}},\label{eq:ssmap0} \\
    p_{\scriptdim|\scriptstr} &= p_{\scriptdim}-\frac{c}{1-p_{\scriptdim}}.\label{eq:ssmap1}
\end{align}
which completes our map.
}

\newstuff{Finally, we note that $c$ is bounded as $c_{\min}\leq c \leq c_{\max}$. These bounds can be derived from constraints on the conditional probability, $0\leq p_{\scriptstr|\scriptdim}\leq 1$ and $0\leq p_{\scriptdim|\scriptstr}\leq 1$. Using Eq.~\eqref{eq:ssmap0}, we obtain
\begin{align}
    -p_{\scriptdim}^2&\leq c \leq p_{\scriptdim}(1-p_{\scriptdim})\,.
\end{align}
Using Eq.~\eqref{eq:ssmap1}, we obtain  
\begin{align}
    -(1-p_{\scriptdim})^2&\leq c \leq p_{\scriptdim}(1-p_{\scriptdim})
\end{align}
Choosing the most restrictive lower bound yields
\begin{align}
    c_{\min} = \max \{-p_{\scriptdim}^2,-(1-p_{\scriptdim})^2\}, 
\end{align}
and the upper bound is determined by $c_{\max}=p_{\scriptdim}(1-p_{\scriptdim})$.}

\subsubsection*{Data availability}
%Data sharing not applicable to this article as no datasets were generated or analyzed during the current study.
%Data from the results presented in this paper may be obtained from the authors upon reasonable request.

Data sharing is not applicable to this article as no datasets were generated in the study. All numerical analyses use standard implementations of ODEs.

\subsubsection*{Contributions}
J.P. and A.K. conceptualized the project, performed the theoretical development, simulations, data analysis, and wrote the first draft of the manuscript. 
J.P., D.S., G.G., A.F., and A.K. contributed to discussion of results, editing, and revisions. A.K. supervised the project. 

\subsubsection*{Competing interests}
The authors declare no competing interests.

\vspace{1pt}

\begin{acknowledgments}
This project was supported by Grant No.~62417 from the John Templeton Foundation. The opinions expressed in this publication are those of the authors and do not necessarily reflect the views of the John Templeton Foundation. AK was partly supported by the European Union's Horizon 2020 research and innovation programme under the Marie Sk{\l}odowska-Curie Grant Agreement No.~101068029.
\end{acknowledgments}

\bibliographystyle{ieeetr}
\bibliography{bibliography}

\end{document}